\newtheorem{proposition}{Proposition}
\newtheorem{lemma}{Lemma}
\newtheorem{assumption}{Assumption}
\newtheorem{theorem}{Theorem}
\newcommand{\R}{\mathbb{R}}
\renewcommand{\P}{\mathbb{P}}
\newcommand{\E}{\mathbb{E}}
\newcommand{\F}{\mathcal{F}}
\newcommand{\tr}{{{\mathsf T}}}
\title{DiffOP: Reinforcement Learning of Optimization-Based Control Policies \\ via Implicit Policy Gradients}
\title{My Publication Title --- Single Author}
\author {
    Author Name
}
\author {
    % Authors
    Yuexin Bian, %\textsuperscript{\rm 1},
    Jie Feng, % \textsuperscript{\rm 1},
    Yuanyuan Shi %\textsuperscript{\rm 1}
}
\begin{document}

\maketitle

\begin{abstract}
Real-world control systems require policies that are not only high-performing but also interpretable and robust. A promising direction toward this goal is model-based control, which learns system dynamics and cost functions from historical data and then uses these models to inform decision-making. Building on this paradigm, we introduce DiffOP, a novel framework for learning optimization-based control policies defined implicitly through optimization control problems.
Without relying on value function approximation, DiffOP jointly learns the cost and dynamics models and directly optimizes the actual control costs using policy gradients.
To enable this, we derive analytical policy gradients by applying implicit differentiation to the underlying optimization problem and integrating it with the standard policy gradient framework.
Under standard regularity conditions, we establish that DiffOP converges to an $\epsilon$-stationary point within $\mathcal{O}(\epsilon^{-1})$ iterations.
We demonstrate the effectiveness of DiffOP through experiments on nonlinear control tasks and power system voltage control with constraints.
The code is available at \href{https://github.com/alwaysbyx/DiffOP}{https://github.com/alwaysbyx/DiffOP}.
% \todo{Under standard regularity conditions, we prove our approach converges to stationary points at a rate of $O(1/K)$. 
% We validate DiffOP across several nonlinear control tasks, showing improved performance compared to existing optimization-based learning methods.}
\end{abstract}

% Uncomment the following to link to your code, datasets, an extended version or similar.
%
% \begin{links}
%     \link{Code}{https://aaai.org/example/code}
%     \link{Datasets}{https://aaai.org/example/datasets}
%     \link{Extended version}{https://aaai.org/example/extended-version}
% \end{links}

\section{Introduction}
\label{sec:introduction}
Operating and controlling complex physical systems in an effective manner is of critical importance to society.
Real-world physical systems, such as power grids~\cite{machowski2020power}, commercial and industrial infrastructures~\cite{jang2024active}, transportation networks~\cite{negenborn2008multi}, and robotic systems~\cite{spong2020robot}, require control policies that are not only high-performing but also interpretable with reliability and safety guarantees. 
To this end, optimization-based policies such as model predictive control (MPC) have been explored with known~\cite{morari1999model,grune2017nonlinear} or learned system dynamics~\cite{chen2019optimal}, offering a principled framework for performance optimization, constraint satisfaction, and transparent decision-making.

Optimization-based policies formulate control as a mathematical optimization problem, where the goal is to minimize a system cost subject to dynamic equations and state-action constraints.
Prior work in this area often focuses on learning system dynamics and cost functions separately from historical data to improve predictive accuracy~\cite{chen2019optimal, salzmann2023real, zhang2021trajectory}. However, this decoupled learning procedure can lead to the objective mismatch problem~\cite{donti2017task, lambert2020objective}, where the learned models perform well in isolation (e.g., accurate prediction), but fail to guide optimal control decisions—since the training objectives do not directly reflect control performance.

To enable end-to-end learning of optimization-based policies, recent works~\cite{srinivas2018universal, amos2018differentiable, jin2020pontryagin, xu2024revisiting} have made the optimization layers differentiable, allowing both the dynamics and cost function to be jointly learned. These approaches primarily focus on supervised imitation learning settings, where the optimization policy is trained to mimic expert demonstrations and achieve effective control.

More recently, research has moved beyond imitation learning to explore reinforcement learning (RL)-based training of optimization-based policies, aiming to improve the control performance through direct interaction with the environment.
In particular, existing works have investigated the integration of MPC with RL~\cite{ esfahani2021approximate, lin2023reinforcement, reiter2025synthesis,lawrence2025view}, leveraging the fact that MPC naturally defines a state-action value function $Q$, which can be updated via Q-learning to enhance closed-loop control performance.
% For example,~\cite{gros2019data, gros2021reinforcement} formulate the MPC policy as a soft Q-function and employ temporal-difference (TD) learning to update policy parameters.
% Similarly,~\cite{seel2023combining} combines Q-learning with deterministic policy gradients to enhance MPC training by leveraging the strengths of both value-based and policy-based updates.

Building on this line of work, our approach departs from value function approximation and instead treats the optimization-based policy as a fully differentiable control module. 
This work introduces DiffOP, a novel optimal control framework based on a \textbf{Diff}erentiable \textbf{O}ptimization-based \textbf{P}olicy.
DiffOP computes analytical policy gradients by integrating trajectory derivatives derived from Pontryagin’s Maximum Principle (PMP)~\cite{jin2020pontryagin, xu2024revisiting} with the standard policy gradient algorithm~\cite{sutton2018reinforcement}, enabling end-to-end learning in model-free RL settings.

Our main contributions are as follows:
\begin{itemize}
    \item We propose DiffOP, which learns optimization-based control policies via policy gradients. A key technical innovation is the joint learning of both the cost function and system dynamics with environmental rewards, using implicit differentiation and policy gradient updates.
    
    \item We provide, to the best of our knowledge, the first non-asymptotic convergence guarantee for learning optimization-based policies via policy gradients. Specifically, we show that DiffOP converges to an $\epsilon$-stationary point within $\mathcal{O}(\epsilon^{-1})$ iterations.
    
    \item We validate DiffOP on a set of nonlinear control tasks and a real-world voltage control problem with safety constraints, demonstrating its ability to learn cost and dynamics models with superior control performance compared to existing RL-based MPC methods and differentiable optimization approaches in supervised settings.
\end{itemize}

\section{Related Work}

\textbf{Differentiable Optimization for Control. } Recent advances have enabled end-to-end learning of model-based control policies (e.g., MPC) by making the underlying optimization differentiable. This allows computing gradients w.r.t. optimization parameters, enabling learning via gradient-based methods.
One approach is to unroll the optimization steps within the computational graph and apply automatic differentiation, as proposed in~\cite{srinivas2018universal,liu2018proximal}. However, this method is memory-intensive and may yield inaccurate gradients due to truncation.
To address the limitations of unrolling, \cite{amos2018differentiable} proposed differentiable MPC by approximating the original problem with iterative LQR and differentiating through its KKT conditions, though this approach does not support state constraints. Subsequently, \cite{jin2020pontryagin} derived exact analytical gradients by differentiating through Pontryagin’s Maximum Principle (PMP), enabling support for state-action constraints and improving computational efficiency. More recently, \cite{xu2024revisiting} introduced IDOC, which applies variable elimination to the KKT system to directly compute trajectory derivatives. While these methods have primarily been applied to imitation learning, our work builds on~\cite{xu2024revisiting} to enable gradient computation for direct policy optimization.

% \todo{Bian: we used PDP instead of IDOC, right? Is there any limitation of IDOC? Ok, you can delete this.}
% \todo{We are using IDOC, IDOC is build on PDP just to reduce the computational cost so from deriving the gradient perspective they are the same to our work }

\noindent
\textbf{Synthesis of MPC and RL. } 
Supervised learning of MPC policies often relies on minimizing proxy losses (e.g., prediction error), which may not align with the true control objectives.
%—an issue known as the \emph{objective mismatch} problem~\cite{lambert2020objective}. 
To overcome this, there is increasing interest in integrating RL with MPC to enhance closed-loop performance.  
For example, \cite{chen2019gnu} leverages Proximal Policy Optimization (PPO) to adapt linear dynamics models within the differentiable-MPC framework~\cite{amos2018differentiable}, relying on LQR approximations.
Given that MPC naturally defines a state-action value function, \cite{gros2019data, gros2021reinforcement, lawrence2025view} employ Q-learning with temporal-difference updates to learn improved MPC policies. Additionally,~\cite{seel2023combining} combines Q-learning with deterministic policy gradients to improve convergence. 
Our work builds on this direction but differs in that, instead of relying on value function approximation or LQR-based structures, we formulate the control policy as the solution to a parameterized nonlinear optimization problem and derive exact analytical gradients via implicit differentiation.
Beyond algorithmic design, our work contributes theoretical convergence results for the proposed framework, building on policy gradient theory~\cite{papini2018stochastic, yang2021sample, zhang2021sample} and recent work in bi-level optimization~\cite{ji2021bilevel, kwon2023fully}.

\begin{figure*}[t]
    \centering
    \includegraphics[width=1.\linewidth]{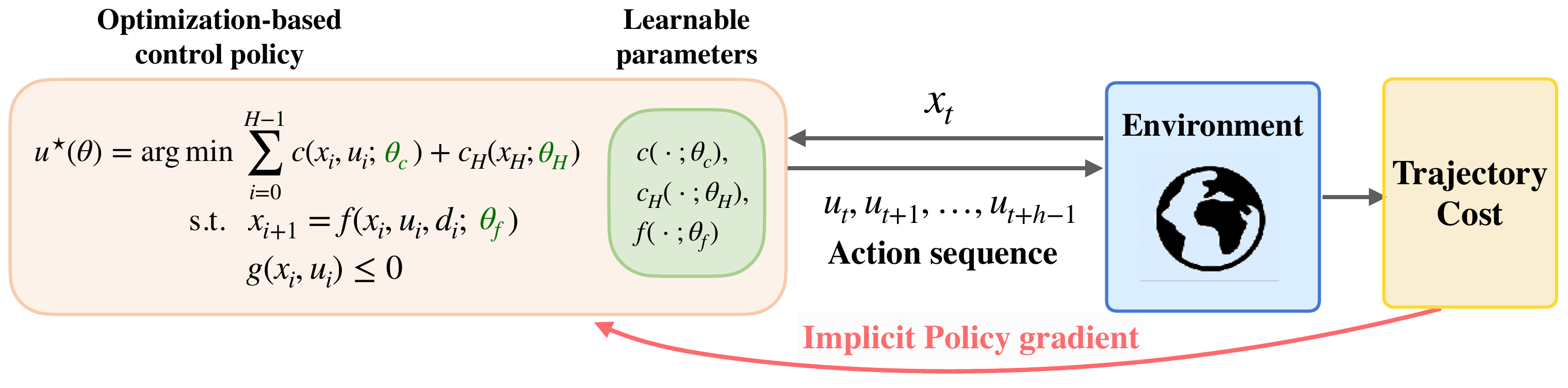}
    \caption{Overview of the DiffOP framework. An optimization-based control policy generates an action sequence by solving a parameterized optimal control problem with learnable dynamics and cost models. The environment executes the action sequence and returns cost feedback, which is used to compute policy gradients for updating the parameters.}
    \label{fig:frame}
\end{figure*}

\section{DiffOP: A Differentiable Optimization Policy}\label{sec:diffop}
In this section, we introduce DiffOP, the optimization-based control policy, and formalize its learning process as a bi-level optimization problem.
Throughout this paper, we use \(i\) to denote the planning step within the optimization horizon, and \(t\) to denote the actual time step in the control system.

\subsection{Optimization-based Control Policy}
The optimization-based control policy is defined as:
\begin{equation}
\label{eq:op_policy}
\begin{aligned}
    u_{0:H-1}^\star(x_{\text{init}}; \theta) = \arg\min_{u} \,\, & \sum_{i=0}^{H-1} c(x_{i}, u_{i}; \theta_c) + c_H(x_H; \theta_H) \\
    \text{subject to} \,\, & x_0 =x_{\text{init}}, \\
    & x_{i+1} = f(x_i, u_i; \theta_f), \\
    & g(x_i, u_i) \leq 0, {\small i = 0, ..., H-1}
\end{aligned}
\end{equation}
where $H$ is the planning horizon, $x_{\text{init}} \in \R^n$ is the initial state. We denote $x_i \in \R^n, u_i \in \R^m$ as the system state and control action at the $i$-th planning step, respectively. $c(x_{i}, u_{i}; \theta_c), c_H(x_H; \theta_H)$ model the instantaneous and terminal costs with learnable parameter $\theta_c$ and $\theta_H$, $f(x_i, u_i; \theta_f)$ models system dynamics with learnable parameters $\theta_f$, and $g(\cdot)$ represents the state and control constraints, which are known to the control agent.  
In summary, our policy~\eqref{eq:op_policy} is represented by some parametric representation of the \emph{unknown} cost and dynamics functions with parameters $\theta = (\theta_c, \theta_H, \theta_f) \in \R^d$.

Given a state $x$, DiffOP generates a sequence of control actions by solving~\eqref{eq:op_policy}. We define the policy output as:
\begin{equation}\label{eq:op_policy2}
\pi_\theta(x) = \{u_0^\star(x; \theta), \ldots, u_{h-1}^\star(x; \theta)\},
\end{equation}
where $u_i^\star$ denotes the $i$-th action from the optimized sequence, and $1 \leq h \leq H$ specifies how many actions are executed before receiving the next observation. 

\vspace{6pt}
\noindent
\textbf{Remark 1.} Although our policy optimizes over a finite horizon, we do not name it as ``MPC'' since it is not limited to receding-horizon execution. DiffOP allows execution of multiple actions per optimization step (\(h \geq 1\)). As shown in both the algorithm and experiments, DiffOP naturally accommodates arbitrary \(h\) through its policy gradient formulation, supporting both step-wise (when $h=1$ as in standard MPC) and trajectory-level control (when $h>1$). 
Moreover, any standard optimization solver~\cite{andersson2019casadi, wachter2006implementation} can be used to solve the policy.

%\noindent
%\textbf{Remark 2.} Our policy formulation supports both state and action safety constraints by directly embedding them into the optimization problem, which can be solved using any off-the-shelf nonlinear optimization solver~\cite{andersson2019casadi}. However, this does not imply safety guarantees under disturbances or model uncertainties. The focus of this work is on improving control performance by directly optimizing the control cost.
%Nonetheless, when the system state admits an analytical form, as demonstrated in the experiment section, hard constraints can be enforced within the optimization. 
%\todo{I do not think this Remark is necessary, especially the ``However, this does not imply safety guarantees under \emph{disturbances or model uncertainties}'' is a bit over-defensive - I do not feel the need to emphasis the robustness unless some reviewers ask in rebuttal process}

\subsection{Policy Learning as a Bi-Level Optimization}
We are interested in learning the parameters in the policy $\theta = (\theta_c, \theta_H, \theta_f) \in \R^d$ to minimize the overall control costs. 
%Unlike traditional settings where the policy outputs a single action at each step, our policy generates and applies a sequence of $h$ actions at each decision point. 
The policy optimization problem is formulated as:
\begin{subequations}\label{eq:problem}
\begin{align}
\min_{\theta} \quad & C(\theta) := \E\left[\sum_{t=0}^T c(x_t, u_t; \phi_c)\right] \label{eq:cost} \\
\text{subject to} \quad
& x_{t+1} = f(x_t, u_t; \phi_f), \,\, \forall t, \label{eq:dynamics} \\
& {u_t, \ldots, u_{t+h-1}} \sim \pi_{\theta}(x_t). \, \label{eq:sto}
\end{align}
\end{subequations}

\noindent \textbf{Upper-level: Unknown system~\eqref{eq:cost}\eqref{eq:dynamics}:} Equation~\eqref{eq:cost} describes the ground-truth system cost model with parameters $\phi_c$ and equation~\eqref{eq:dynamics} describes ground-truth system dynamics parameterized by $\phi_f$, both are \emph{unknown} to the agent. The decision maker seeks to learn an optimization-based policy in the form of \eqref{eq:op_policy} with parameters $\theta = (\theta_c, \theta_H, \theta_f)$ that minimizes the expected accumulated costs in the real system.

\noindent \textbf{Lower-level: Stochastic optimization-based policy~\eqref{eq:sto}:} The lower-level optimization involves solving the optimization policy to determine the control action at each time step. To enable effective exploration during policy learning, we introduce stochasticity by reparameterizing the policy ${u_t, \ldots, u_{t+h-1}} \sim \pi_{\theta}(x_t)$ as a truncated Gaussian distribution. Specifically, at each decision point $t$, the policy generates a sequence of $h$ actions by solving the optimization, and each action is sampled as: 
% \todo{Why do we need to sample each action separately? Why not sample the entire trajectory from the Multi-variate Gaussian? The assumption $\sigma^2I$ is a bit restrictive - maybe no time to address it now. But better to write in a joint sample way like $\{u_t, \ldots, u_{t+h-1}\} \sim $}
\begin{equation}\label{eq:sample}
u_i \sim \mathcal{N}(u_i^\star, \sigma^2 I), \,\, \text{clipped to } [u_i^\star \pm \beta \sigma^2 I],
\end{equation}
for $i = t, \ldots, t+h-1$. Here $u_i^\star$ denotes the deterministic action at time step obtained by solving the optimization-based policy~\eqref{eq:op_policy2}.  The term $\sigma^2 I$ is the covariance matrix, with $\sigma^2 > 0$ controlling the variance of the exploration noise, and $I \in \R^{m \times m}$ is the identity matrix. 
The truncation bounds $u_\tau^\star \pm \beta \sigma^2I$ ensure the sampled actions remain within a controlled range, with $\beta > 0$ serving as a hyperparameter that regulates the extent of exploration. This independent Gaussian sampling is used for simplicity and analytical tractability in convergence analysis. In practice, DiffOP can be extended to sample the action sequence jointly from a multivariate Gaussian, allowing for correlated noise across the planning horizon. The expectation $\E$ is taken over the initial state distribution $x_0 \sim \mathcal{D}$ and the trajectory $(x_0, u_0, \ldots, x_T, u_T)$ induced by the stochastic policy~\eqref{eq:sto}. 
By choosing a truncated Gaussian distribution over a standard Gaussian distribution, we explicitly bound the deviation of sampled actions from their optimal counterparts. This bounded exploration helps control the variance of the policy gradients, which is used for establishing gradient \emph{boundedness} and \emph{convergence} analysis of our algorithm. 

%Overall, since the control policy~\eqref{eq:sto} is defined as the solution to an optimization, the overall learning problem~\eqref{eq:problem} is a stochastic bi-level optimization w.r.t. the policy parameters $\theta$. 
%Crucially, by viewing \eqref{eq:problem} as a policy optimization task within a Markov Decision Process (MDP) \todo{This will not be a standard MDP if you execute multiple actions. Since standard MDP is defined as $(x_t, a_t, x_{t+1})$. However, in the trajectory version, the transition function $P(x_{t+h}|x_t, a_t, ..., a_{t+h})$ - so it's a coarser MDP every $h$ steps, but not a standard MDP if you look at each 1-step. I will be careful about calling it MDP here. Not sure if any of your theory replies on this MDP structure}, with the inner optimization-based policy, we recast the original bi-level problem into a standard RL setting solvable by policy-gradient methods.
Crucially, we view problem~\eqref{eq:problem} as a \textit{policy optimization} task, where the policy is implicitly defined through an optimization problem. This formulation enables the adoption of policy gradient methods for end-to-end training.

\section{Proposed Algorithm}\label{sec:algorithm}
The proposed algorithm is summarized in Algorithm~\ref{alg:diffop}.
Our algorithm builds on the classic policy gradient framework~\cite{sutton2018reinforcement}, and incorporates trajectory derivatives derived from Pontryagin’s Maximum Principle (PMP)~\cite{jin2020pontryagin, xu2024revisiting} to compute gradients of the actual cost with respect to the policy parameters \(\theta := \{\theta_c, \theta_H, \theta_f\}\), which include the stage cost parameters \(\theta_c\), terminal cost parameters \(\theta_H\), and dynamics model parameters \(\theta_f\).

\begin{algorithm}[h]
\caption{DiffOP}
\label{alg:diffop}
\begin{algorithmic}
    \STATE {\bfseries Input:} $\theta^{(0)} = (\theta_c^{(0)}, \theta_H^{(0)}, \theta_f^{(0)})$, learning rate $\eta$

\FOR{$k = 0, 1, \ldots, K-1$}
    \FOR{$n = 1, \ldots, N$}
    \STATE Sample the action as in eq~\eqref{eq:op_policy2}\eqref{eq:sample}, $\forall t$
    \STATE Collect the trajectory data $(x_t^{(n)}, u_t^{(n)}, {u_t^\star}^{(n)}), \forall t$ 
     \STATE Calculate the gradient $\nabla_\theta {u_t^\star}^{(n)}$ as in \underline{Proposition 1} 
    \ENDFOR
    \STATE Estimate policy gradient $\widehat{\nabla}_\theta C(\theta^{(k)})$ as in \underline{Proposition 2} 
    \STATE Update policy $\theta^{(k+1)} \leftarrow \theta^{(k)} - \eta \widehat{\nabla}_\theta C(\theta^{(k)})$
\ENDFOR
\end{algorithmic}
\end{algorithm}

At each training iteration $k$, the algorithm collects a batch of $N$ trajectories by sampling actions from the current optimization-based policy. For each trajectory, it computes the gradients $\nabla_\theta u^\star$ of the optimal actions with respect to the policy parameters. These gradients are then used to estimate the overall policy gradient $\widehat{\nabla}_\theta C(\theta)$, which guides the parameter update via standard gradient descent. To derive these gradients, we first present Proposition~\ref{prop:grad_policy}, which characterizes the differentiability of the optimization-based policy w.r.t. its parameters using the first-order optimality conditions and the implicit function theorem.

We begin by defining $x_i^\star, u_i^\star$ as the optimal state and action obtained by solving the optimization policy~\eqref{eq:op_policy}, where $i=0, \ldots, H-1$ denotes the $i$-th planning step. We further define $\zeta^\star = (x_0^\star, u_0^\star, x_1^\star, u_1^\star, \ldots, u_{H-1}^\star, x_{H}^\star)$ as the optimal solution trajectory over the horizon $H$. 
We stack all constraints into a single vector-valued constraint $\kappa:= (\kappa_{-1}, \kappa_{0}, \kappa_{1}, \ldots, \kappa_{H}) \in \R^{n_\kappa}, n_\kappa = (H+1)n + q$. Specifically, $\kappa_{-1} = x_0^\star$ is the input to the policy, and 
\begin{equation}
    \kappa_i = 
    \begin{cases}
        (\Tilde{g}(x^\star_i, u^\star_i), x^\star_{i+1}-f(x^\star_i, u^\star_i;\theta_f)),  i = 0, \ldots, H-1, \\
        \Tilde{g}(x^\star_H), i = H,
    \end{cases}
\end{equation}
where $\Tilde{g}(x_i^\star, u_i^\star)$ are the subset of active inequality constraints at $i$-th step, and $q$ is the total number of active inequality constraints. The optimization problem in~\eqref{eq:op_policy} can be solved using general-purpose solvers~\cite{gill2005snopt,diamond2016cvxpy} to determine both the optimal solution $\zeta^\star$ and its active constraints $\kappa$.

\begin{proposition}[Gradient of the optimization-based policy]\label{prop:grad_policy}
Suppose $u^\star$ is the solution of the optimization-based policy~\eqref{eq:op_policy} and denote $\zeta^\star$ as the resulting trajectory. Assume $c(\cdot), c_H(\cdot), f(\cdot), g(\cdot)$ are twice differentiable in a neighborhood of $(\theta, \zeta^\star)$. Let
% $A = \nabla_{\zeta} z(\zeta^\star; \theta)$,
% $B = \nabla^2_{\theta \zeta} J(\zeta^\star;\theta) - \sum_{i=-1}^{H} \sum_{j=0}^{|z_j|-1} \lambda_{i,j} \nabla^2_{\theta \zeta} [\kappa_i(\zeta^\star; \theta)]_j,$
% $C = \nabla_{\theta} z(\zeta^\star; \theta),$
% $D = \nabla^2_{\zeta \zeta} J(\zeta^\star; \theta) - \sum_{i=-1}^{H} \sum_{j=0}^{|z_j|-1} \lambda_{i,j} \nabla^2_{\zeta \zeta} [\kappa_i(\zeta^\star;\theta)]_j,$
\begin{equation*}\label{eq:tau_matric}
\begin{aligned}
    & A = \nabla_{\zeta} \kappa(\zeta^\star; \theta),  \\
    & B = \nabla^2_{\theta \zeta} J(\zeta^\star;\theta) - \sum_{i=-1}^{H} \sum_{j=1}^{|\kappa_i|} \lambda_{i,j} \nabla^2_{\theta \zeta} [\kappa_i(\zeta^\star; \theta)]_j, \\
    & C = \nabla_{\theta} \kappa(\zeta^\star; \theta), \\
    & D = \nabla^2_{\zeta \zeta} J(\zeta^\star; \theta) - \sum_{i=-1}^{H} \sum_{j=1}^{|\kappa_i|} \lambda_{i,j} \nabla^2_{\zeta \zeta} [\kappa_i(\zeta^\star;\theta)]_j,
\end{aligned}
\end{equation*}
If $\text{rank}(A)=n_\kappa$ and $D$ is non-singular,
then the gradient $\nabla_{\theta} u_i^\star$ takes the following form, 
\begin{equation}\label{eq:u_grad}
    \nabla_{\theta} u^\star_{i} = [\nabla_{\theta} \zeta^\star]_{(n+m)\cdot i+n:(n+m)\cdot(i+1)},
\end{equation} 
with 
\begin{equation*}\label{eq:tau_grad}
\begin{aligned}
    & \nabla_{\theta} \zeta^\star = D^{-1}A^\tr (AD^{-1}A^\tr )^{-1} (AD^{-1}B-C)-D^{-1}B, 
\end{aligned}
\end{equation*}
where $n, m$ are the state and action dimensions, and the Lagrange multiplier $\lambda \in \R^{n_\kappa}$ satisfies $\lambda^\tr A = \nabla_{\zeta} J(\zeta^\star; \theta)$.
\end{proposition}
We note that $\lambda_{i,j}$ is the dual variable corresponding to the $j$-th element of $\kappa_i(\zeta^\star;\theta)$. 
Proposition \ref{prop:grad_policy} is a direct application of the constrained optimization differentiation~\cite{xu2024revisiting} and its proof is shown in Appendix~\cite{bian2025diffopreinforcementlearningoptimizationbased} for completeness.

We now turn to the estimation of the policy gradient.
Let $\tau = (x_0, u_0, x_1, u_1, \ldots, u_{T-1}, x_T)$ be the trajectory induced by the policy. The analytical policy gradient update rule would be the following:
\begin{proposition}[Policy gradient update]\label{prop:2}
    Consider the policy learning problem~\eqref{eq:problem}, the policy gradient takes the analytical form of: 
\begin{equation}\label{eq:pg}
\begin{aligned}
    \nabla_\theta C(\theta) 
    = \E \left[L(\tau) \left(\sum_{t=0}^{T} \frac{1}{\sigma^2} [\nabla_\theta u_{t}^\star]^\tr (u_{t}-u_{t}^\star)\right)\right],
\end{aligned}
\end{equation}
where $u_t, u^\star_{t}$ are the actual control action and corresponding optimal solution derived by the policy at time $t$.
\end{proposition}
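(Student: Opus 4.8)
The plan is to derive \eqref{eq:pg} from the score-function (log-derivative) form of the policy gradient theorem, specialized to the trajectory distribution induced by the truncated-Gaussian DiffOP policy \eqref{eq:sample}. First I would write the objective as an integral over trajectories: letting $p_\theta(\tau)$ denote the density of $\tau=(x_0,u_0,\dots,x_T)$ and setting $L(\tau)=\sum_{t=0}^{T}c(x_t,u_t;\phi_c)$, we have $C(\theta)=\E[L(\tau)]=\int L(\tau)\,p_\theta(\tau)\,d\tau$. Because the true dynamics \eqref{eq:dynamics} are deterministic, the trajectory density factorizes as $p_\theta(\tau)=\mathcal{D}(x_0)\prod_{t=0}^{T}\pi_\theta(u_t\mid x_t)$, where the state-transition factors are Dirac masses carrying no dependence on $\theta$. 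Differentiating under the integral and using $\nabla_\theta p_\theta=p_\theta\,\nabla_\theta\log p_\theta$ then yields
\begin{equation*}
\nabla_\theta C(\theta)=\E\!\left[L(\tau)\,\nabla_\theta\log p_\theta(\tau)\right]=\E\!\left[L(\tau)\sum_{t=0}^{T}\nabla_\theta\log\pi_\theta(u_t\mid x_t)\right],
\end{equation*}
since only the policy factors depend on $\theta$.

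Next I would evaluate the score $\nabla_\theta\log\pi_\theta(u\mid x)$ for the truncated Gaussian. Writing $\log\pi_\theta(u\mid x)=-\frac{1}{2\sigma^2}\|u-u^\star(\theta)\|^2-\log Z_\theta+\log\mathbf{1}[\,u\in u^\star(\theta)\pm\beta\sigma^2 I\,]$, the key observation is that the truncation window is symmetric about the mean and of fixed width, so the change of variables $v=u-u^\star(\theta)$ shows the normalizer $Z_\theta=\int_{-\beta\sigma^2 I}^{\beta\sigma^2 I}\exp(-\|v\|^2/2\sigma^2)\,dv$ is independent of $\theta$, i.e. $\nabla_\theta\log Z_\theta=0$. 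On the interior of the support the indicator contributes nothing, so the chain rule gives $\nabla_\theta\log\pi_\theta(u\mid x)=\frac{1}{\sigma^2}\nabla_\theta u^\star(\theta)^\tr\,(u-u^\star(\theta))$. Substituting this together with $u_t^\star(\theta)=\pi_\theta(x_t)$ into the previous display produces exactly \eqref{eq:pg}. The Jacobian $\nabla_\theta u^\star(\theta)$ is well defined and computable by differentiating the optimality (KKT) conditions of \eqref{eq:op_policy} through the implicit function theorem, which is precisely what the differentiable-optimization machinery supplies.

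I expect the main obstacle to be the rigorous justification of the interchange of differentiation and integration in the presence of the $\theta$-dependent support: since the window $u^\star(\theta)\pm\beta\sigma^2 I$ moves rigidly with the mean, a naive Leibniz differentiation also produces boundary contributions proportional to the density evaluated at the truncation endpoints. I would handle this via the rigid-translation reparameterization $u=u^\star(\theta)+\eta$ with $\eta$ drawn from a fixed, $\theta$-independent zero-mean truncated normal; this moves the $\theta$-dependence out of the integration domain, legitimizes differentiating under the integral, and lets me argue that the endpoint terms either cancel by symmetry of the Gaussian kernel or are controlled by the truncation level, so that the stated interior score-function form is recovered. A secondary and more routine step is to invoke the local boundedness and continuity of $\theta\mapsto u^\star(\theta)$ (again from the implicit function theorem) together with an integrability/dominated-convergence argument to ensure $L(\tau)\,\nabla_\theta\log\pi_\theta(u_t\mid x_t)$ has finite expectation, licensing the exchange of $\nabla_\theta$ and $\E$ throughout.
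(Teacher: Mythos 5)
Your proposal follows essentially the same route as the paper's proof: write $C(\theta)=\E[L(\tau)]$ as an integral over trajectories, invoke the log-derivative identity so that only the policy factors contribute, and compute $\nabla_\theta\log\pi_\theta(u_t|x_t)=\tfrac{1}{\sigma^2}\nabla_\theta u_t^\star(\theta)^\tr(u_t-u_t^\star(\theta))$ from the truncated-Gaussian density. You are in fact more careful than the paper on one point: you note explicitly that the normalizer $Z$ is $\theta$-independent because the truncation window translates rigidly with the mean, whereas the paper's proof simply treats $Z$ as a constant when differentiating.

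The boundary contribution you flag at the end, however, is a genuine issue, and your sketch does not close it (nor does the paper, which ignores it entirely). For a density whose compact support moves with $\theta$, Leibniz's rule produces, in addition to the interior score term, endpoint terms proportional to $\nabla_\theta u_t^\star(\theta)$ times the Gaussian kernel at the truncation boundary times the \emph{difference} of the integrand at the two endpoints; these cancel only when the integrand is symmetric about $u_t^\star(\theta)$, which $L(\tau)$ generally is not. A one-dimensional sanity check with $L(u)=u$ makes this concrete: the interior score term alone evaluates to $\tfrac{\mathrm{Var}(u)}{\sigma^2}\nabla_\theta u^\star$, and for a symmetrically truncated Gaussian $\mathrm{Var}(u)=\sigma^2\bigl(1-\tfrac{2\alpha\varphi(\alpha)}{2\Phi(\alpha)-1}\bigr)<\sigma^2$ with $\alpha=\beta\sigma$, so the score-function form undershoots the true derivative $\nabla_\theta u^\star$ by exactly the omitted boundary term. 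Your proposed remedy, the rigid-translation reparameterization $u=u^\star(\theta)+\eta$, does legitimately move the $\theta$-dependence out of the integration domain, but what it then yields is the pathwise gradient $\E[\nabla_\theta u^\star(\theta)^\tr\nabla_u L]$, a different (and correct) formula, not the score-function form \eqref{eq:pg}; it does not show that the endpoint terms "cancel by symmetry." So your instinct about where the difficulty lies is right, but as written the argument does not recover the stated identity exactly; it holds only up to a correction of order $\varphi(\beta\sigma)/(2\Phi(\beta\sigma)-1)$, vanishing in the untruncated limit.
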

\begin{proof}
    The result follows from the REINFORCE gradient estimator~\cite{sutton2018reinforcement}, applied to a stochastic policy modeled as a Gaussian centered at the optimization-based solution \( u_t^\star \). For a detailed derivation, please refer to Appendix~\cite{bian2025diffopreinforcementlearningoptimizationbased}.
\end{proof}
By establishing differentiability of \( u_t^\star \) with respect to \( \theta \) via the implicit function theorem as described in Proposition~\ref{prop:grad_policy}, we estimate the policy gradient in practice by Monte Carlo sampling over $N$ trajectories: 
% The result relies on the differentiability of \( u_t^\star \) with respect to \( \theta \), which is established via the implicit function theorem as described in Proposition~\ref{prop:grad_policy}.
% In practice, one can use Monte Carlo sampling to evaluate the policy gradient with $N$ trajectories: 
\begin{equation}\label{eq:estimate_pg}
\scalebox{0.85}{$
\widehat{\nabla}_{\theta} C(\theta^{(k)}) = \frac{1}{N} \sum_{n=1}^{N} \left[ L(\tau^{(n)}) \sum_{t=0}^{T} \frac{1}{\sigma^2} \nabla_\theta u_t^{\star(n)\top} (u_t^{(n)} - u_t^{\star(n)}) \right]
$}
\end{equation}

It is worth noting that although we employ the REINFORCE algorithm~\cite{sutton2018reinforcement} in Algorithm~\ref{alg:diffop} to estimate the policy gradients, there are several techniques available to reduce the variance of the gradient estimator~\cite{zhao2011analysis, grathwohl2017backpropagation} which could be incorporated into the framework. 

\section{Non-Asymptotic Convergence Analysis}\label{sec:convergence}
We present our main theoretical result, which establishes the convergence properties of DiffOP when trained using Algorithm~\ref{alg:diffop}. Since the policy is implicitly defined through an optimization problem, the sensitivity of its solution with respect to the policy parameters is central to the analysis.

\subsection{Convergence Under Bounded Sensitivity}

We begin by stating a sufficient condition for convergence: bounded first- and second-order sensitivity of the optimization-defined policy with respect to its parameters.
\begin{assumption}[Bounded First/Second-Order Sensitivity]
\label{as:bound}
There exists a constant \( M_{u1} > 0, M_{u2} > 0 \) such that for all planning step $i$, the solution $u_i^\star(x; \theta)$ to the optimization-defined policy satisfies:
\[
\left\| \nabla_\theta u_i^\star(x; \theta) \right\| \leq M_{u1},  \,\,
\left\| \nabla^2_\theta u_i^\star(x; \theta) \right\| \leq M_{u2}.
\]
\end{assumption}

This assumption ensures that the policy responds smoothly to changes in the parameters $\theta$. Next, we assume standard conditions within the policy optimization~\cite{papini2018stochastic}, regarding the initial state distribution and the trajectory cost.
\begin{assumption}\label{as:3}
    The initial state distribution $\mathcal{D}$ is supported in a region with a finite radius $D_0$. 
    For any initial state $x_0 \sim \mathcal{D}$, the trajectory cost $L(\tau)$ is bounded, i.e., there exists a positive constant $M_L > 0$ such that $L(\tau) \leq M_L$.
\end{assumption}

Now we can characterize the convergence of the proposed algorithm. 

\begin{theorem}[Convergence of DiffOP with Policy Gradient]
\label{th:converge}
Suppose Assumptions~\ref{as:bound} and~\ref{as:3} hold. For any $\epsilon > 0$, and $\nu \in (0,1)$, define a smoothness constant
\[
L_C = M_LT(\sqrt{m}\beta M_{u2} + \frac{1}{\sigma^2}M_{u1}^2 + Tm\beta^2M_{u1}^2), 
\]
a stepsize $\eta = \frac{1}{4L_C}$, 
the number of policy iterations $K$, and the number of sampled trajectories for each policy gradient step $N= \frac{2m\beta^2M_L^2T^2M_{u1}^2}{\epsilon^2} \log \frac{2Kd}{\nu}$. Then, with probability at least $1-\nu$, we have
\begin{align}\label{eq:pg_convergence}
    \min_{k=0, \ldots, K-1} \|\nabla_\theta C (\theta^{(k)})\|^2 \leq \frac{16L_C(C(\theta^{(0)}) -C(\overline{\theta}))}{K} + 3\epsilon.
\end{align}
where $\overline{\theta}$ is the global optimum of \eqref{eq:problem}. 
\end{theorem}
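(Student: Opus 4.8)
The plan is to cast the result as the standard high-probability analysis of unbiased stochastic gradient descent on a nonconvex but $L_C$-smooth objective, so that the proof decomposes into three independent ingredients: (i) showing $C(\theta)$ is $L_C$-smooth with exactly the stated constant, (ii) a uniform-over-iterations concentration bound certifying that the REINFORCE estimator $\widehat\nabla_\theta C(\theta^{(k)})$ from Algorithm~\ref{alg:diffop} is $\epsilon$-close to $\nabla_\theta C(\theta^{(k)})$, and (iii) a descent-plus-telescoping argument. For (iii), the starting point is the descent inequality implied by smoothness, $C(\theta^{(k+1)}) \le C(\theta^{(k)}) + \langle \nabla_\theta C(\theta^{(k)}),\, \theta^{(k+1)}-\theta^{(k)}\rangle + \frac{L_C}{2}\|\theta^{(k+1)}-\theta^{(k)}\|^2$, into which the update rule $\theta^{(k+1)}=\theta^{(k)}-\eta\widehat\nabla_\theta C(\theta^{(k)})$ is substituted.

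The main obstacle is ingredient (i): deriving the explicit constant $L_C$. I would first use Proposition~\ref{prop:grad_policy}, which in the unconstrained case reduces to the implicit-function formula $\nabla_\theta u^\star = -(\nabla_u^2 J)^{-1}\,\nabla_\theta\nabla_u J$. Assumption~\ref{as:1} gives $\|(\nabla_u^2 J)^{-1}\|\le 1/\mu$ and Assumption~\ref{as:2} gives $\|\nabla_\theta\nabla_u J\|\le L_1$, yielding the key bound $\|\nabla_\theta u^\star\|\le L_1/\mu$. The harder step is a Lipschitz estimate for $\theta\mapsto\nabla_\theta u^\star(\theta)$: writing the difference of two inverse-Hessian-times-mixed-derivative products and adding and subtracting terms, the resolvent perturbation $\|(\nabla_u^2 J(z))^{-1}-(\nabla_u^2 J(z'))^{-1}\|\le \frac{L_3}{\mu^2}\|z-z'\|$ (from $L_3$-Lipschitzness of $\nabla_u^2 J$) combines with the $L_2$-Lipschitzness of $\nabla_\theta\nabla_u J$ and the $L_1,\mu$ bounds to produce precisely the combination $(L_2\mu^2+L_1L_2\mu+L_1L_3\mu+L_1^2L_3)/\mu^3$ appearing in $L_C$. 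Propagating these through the policy-gradient expression of Proposition~\ref{prop:2}—where the score is $\frac{1}{\sigma^2}\nabla_\theta u_t^{\star\tr}(u_t-u_t^\star)$, the cost obeys $L(\tau)\le M$ by Assumption~\ref{as:3}, and the truncation in~\eqref{eq:sample} forces $\|u_t-u_t^\star\|\le \beta\sigma^2\sqrt m$—and summing over the horizon $T$ explains the remaining $\frac{L_1^2 T}{\mu^2\sigma^2}$ and $\frac{m\beta^2L_1^2T^2}{\mu^2}$ terms. Keeping the compact set $\mathcal{Z}$ in view is what makes these global constants well-defined.

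For ingredient (ii), I would bound each per-trajectory summand in $\widehat\nabla_\theta C$ coordinatewise: combining $L(\tau)\le M$, $\|\nabla_\theta u_t^\star\|\le L_1/\mu$, and $\|u_t-u_t^\star\|\le\beta\sigma^2\sqrt m$ shows every coordinate lies in an interval of half-width $O(M\beta\sqrt m\, T L_1/\mu)$. Since the $N$ trajectories are i.i.d.\ and the estimator is unbiased by Proposition~\ref{prop:2}, a coordinatewise Hoeffding inequality followed by a union bound over the $d$ coordinates and the $K$ iterations yields $\|\widehat\nabla_\theta C(\theta^{(k)})-\nabla_\theta C(\theta^{(k)})\|\le\epsilon$ simultaneously for all $k$ with probability at least $1-\nu$, provided $N$ is as prescribed; the factor $\log(2Kd/\nu)$ and the numerator $2m\beta^2M^2T^2L_1^2/\mu^2$ are exactly what the Hoeffding bound demands.

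Finally, I would conclude by plugging the update into the descent inequality, decomposing $\widehat\nabla_\theta C=\nabla_\theta C + e^{(k)}$, and using $\eta=1/(4L_C)$ so that the $\|\nabla_\theta C(\theta^{(k)})\|^2$ coefficient stays negative while the cross term $\langle\nabla_\theta C,\, e^{(k)}\rangle$ and $\|e^{(k)}\|^2$ are controlled on the event $\|e^{(k)}\|\le\epsilon$ from step (ii). Telescoping over $k=0,\dots,K-1$, lower-bounding $C(\theta^{(K)})\ge C(\overline\theta)$, and dividing by $K$ turns the sum into a minimum over iterates; tracking the numerical constants from $\eta=1/(4L_C)$ produces the leading factor $16L_C$ and collects the residual error into the additive $3\epsilon$, giving~\eqref{eq:pg_convergence}. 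I expect the bookkeeping in the Lipschitz estimate of $\nabla_\theta u^\star$ to be the only genuinely delicate part; the concentration and descent steps are standard once the smoothness constant is in hand.
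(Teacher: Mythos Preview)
Your proposal is correct and follows essentially the same route as the paper: the paper likewise decomposes the argument into (i) bounds on $\|\nabla_\theta u^\star\|$ and the Lipschitz constant of $\nabla_\theta u^\star$ via the implicit-function formula and Assumptions~\ref{as:1}--\ref{as:2}, yielding the claimed $L_C$ through the Hessian-of-log-policy terms; (ii) a Hoeffding-type concentration bound on the REINFORCE estimator using the same per-trajectory bound $\|X_i\|\le \sqrt m\,\beta M T L_1/\mu$; and (iii) the smooth-descent-plus-telescoping step with $\eta=1/(4L_C)$. The only cosmetic differences are that the paper handles the uniformity over $K$ iterations with a filtration/conditional-probability argument rather than your direct union bound, and it organizes the descent inequality via two auxiliary quantities $e_1,e_2$ rather than your decomposition $\widehat\nabla C=\nabla C+e^{(k)}$; both routes produce the same $16L_C/K$ and $3\epsilon$ constants.
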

\begin{proof}
The detailed proof is provided in Appendix~\cite{bian2025diffopreinforcementlearningoptimizationbased}, building on standard stochastic approximation techniques~\cite{papini2018stochastic} and adapted to the implicit policy structure induced by optimization-based control.
\end{proof}

\subsection{Sufficient Conditions for Bounded Sensitivity}\label{sec:strongconvex}

We now present a common class of optimization-based policies that satisfy the sensitivity bounds in Assumption~\ref{as:bound}. Specifically, we consider the unconstrained strongly convex policy~\eqref{eq:op_policy}, for which concrete sufficient conditions can be established.

Let \( x_0 = x_{\text{init}} \) denote the initial state, and let \( u = [u_0, \ldots, u_{H-1}] \in \R^{mH} \) be the control sequence generated by the optimization-based policy with parameters \( \theta \). The unrolled cost function can be written as:
\begin{equation}\label{eq:cost_policy}
     J(x_{\text{init}}, u, \theta)  = \sum_{i=0}^{H-1} c(x_i, u_i; \theta_c) + c_H(x_H; \theta_H),
\end{equation}
subject to the dynamics \( x_0 = x_{\text{init}}, \; x_{i+1} = f(x_i, u_i; \theta_f) \) for \( i = 0, \ldots, H-1 \).
We next introduce technical assumptions that characterize the optimization landscape and smoothness properties of this class of optimization-defined policies.
\begin{assumption}\label{as:1}
The function \( J(x_{\text{init}}, u, \theta) \) is \( \mu \)-strongly convex with respect to \( u \).
\end{assumption}
Strong convexity ensures the uniqueness of the optimal solution and the stability of the mapping from \( \theta \) to the control sequence \( u^\star(x; \theta) \). This injectivity yields well-defined gradients and the convergence guarantees in our policy optimization framework.

\begin{assumption}
\label{as:2}
Let \( z = (x_{\text{init}}, u, \theta) \) and \( \mathcal{Z} \subset \R^{n_z} \) be a compact set. The unrolled cost function \( J(z) \) satisfies the following:
\begin{itemize}
    \item \( \nabla_z J(z) \) is \( L_1 \)-Lipschitz: for any \( z, z' \in \mathcal{Z} \),
    \[
    \left\| \nabla_z J(z) - \nabla_z J(z') \right\| \leq L_1 \| z - z' \|.
    \]
    \item \( \nabla_\theta \nabla_u J(z) \) is \( L_2 \)-Lipschitz: for any \( z, z' \in \mathcal{Z} \),
    \[
    \left\| \nabla_\theta \nabla_u J(z) - \nabla_\theta \nabla_u J(z') \right\| \leq L_2 \| z - z' \|.
    \]
    \item \( \nabla_u^2 J(z) \) is \( L_3 \)-Lipschitz: for any \( z, z' \in \mathcal{Z} \),
    \[
    \left\| \nabla_u^2 J(z) - \nabla_u^2 J(z') \right\| \leq L_3 \| z - z' \|.
    \]
\end{itemize}
\end{assumption}

Assumption~\ref{as:2} captures the smoothness of the unrolled cost function and its partial derivatives with respect to the optimization variables and policy parameters. These conditions ensure the gradients and Hessians involved in the policy sensitivity analysis are well-behaved. Given Assumptions~\ref{as:1} and~\ref{as:2}, we can show that Assumption 2 is satisfied. 

\begin{proposition}
\label{prop:bounded_sensitivity}
Under Assumptions~\ref{as:1} and~\ref{as:2}, the optimization-defined policy \( \pi_\theta \) satisfies the bounded sensitivity condition stated in Assumption~\ref{as:bound}. In particular, there exist constants
\[
M_{u1} = \frac{L_1}{\mu}, \quad
M_{u2} = \frac{L_2}{\mu} + \frac{L_1 L_2 + L_1 L_3}{\mu^2} + \frac{L_1^2 L_3}{\mu^3},
\]
such that for all planning steps \( i \),
\[
\left\| \nabla_\theta u_i^\star(x; \theta) \right\| \leq M_{u1}, \quad
\left\| \nabla^2_\theta u_i^\star(x; \theta) \right\| \leq M_{u2}.
\]
\end{proposition}
\begin{proof}
    The result follows from applying the implicit function theorem to the first-order optimality conditions of the unconstrained problem. The complete proof is provided in Appendix~\cite{bian2025diffopreinforcementlearningoptimizationbased}. 
\end{proof}
% Our non-asymptotic convergence analysis shows that the proposed DiffOP framework reaches an \(\epsilon\)-stationary point within \(\mathcal{O}(1/\epsilon)\) iterations, under mild smoothness and strong convexity assumptions. This matches the standard convergence rate for first-order policy gradient methods, despite the additional complexity introduced by the implicit, optimization-based policy.  
% The main technical challenge arises from bounding the sensitivity of the optimization solution w.r.t. the policy parameters. We address this by imposing structural assumptions on the inner optimization problem and applying analytical tools from bilevel optimization theory~\cite{ji2021bilevel, kwon2023fully}.
Our analysis makes a key theoretical contribution by establishing the first non-asymptotic convergence guarantee for learning optimization-based policies via policy gradients. We show that DiffOP converges to an \(\epsilon\)-stationary point within \(\mathcal{O}(1/\epsilon)\) iterations under mild smoothness and strong convexity assumptions. Notably, this matches the standard convergence rate of first-order policy gradient methods, despite the added complexity of differentiating through an implicit optimization-based policy.

The central technical challenge lies in bounding the sensitivity of the optimization solution w.r.t. policy parameters. We address this by introducing structural assumptions on the inner optimization problem and leveraging analytical tools from bilevel optimization theory~\cite{ji2021bilevel, kwon2023fully}. This analysis not only underpins our convergence guarantee but also provides general insights into when optimization-based policies admit stable gradient-based learning. In more general settings, such as non-convex or constrained optimization problems, the solution mapping can become non-smooth or even discontinuous, especially near constraint boundaries or saddle points. This can lead to unbounded gradients and unstable updates during training, making theoretical convergence guarantees significantly harder to establish. For instance, in the presence of inequality constraints, small changes in parameters can trigger abrupt changes in the active set, breaking the smoothness assumptions required for standard implicit differentiation. Extending our convergence analysis to cover such cases remains an important and challenging direction for future work.

% We note that Proposition~\ref{prop:bounded_sensitivity} focuses on the case where the policy objective is strongly convex. A sufficient condition for this property is that the stage cost is strongly convex in the control input \( u \), and the system dynamics are convex and non-decreasing.  
% As shown in~\cite{chen2019optimal}, the non-decreasing assumption can be relaxed by augmenting the decision variable as \( \widehat{u} = \begin{bmatrix} u & -u \end{bmatrix} \), allowing the dynamics to remain convex without requiring monotonicity.  
% Such formulations arise naturally in quadratic programs~\cite{amos2017optnet} and regularized trajectory optimization~\cite{pineda2022theseus}, and are widely used in robotics, energy systems, and autonomous control.

% Beyond this case, other classes of optimization-defined policies may also satisfy Assumption~\ref{as:bound} under suitable conditions. For instance, constrained convex problems that meet second-order sufficient conditions~\cite{maurer1995second} and appropriate constraint qualifications can exhibit smooth and bounded policy sensitivities through the implicit function theorem. Extending our convergence analysis to such broader classes remains an important direction for future work.
However, in practice, certain smooth non-convex constrained problems may exhibit locally bounded sensitivities~\cite{scieur2022curse} and satisfy Assumption~\ref{as:bound}. To demonstrate the practical performance of DiffOP, we include experiments on constrained control tasks that highlight its applicability to realistic settings.

\begin{table*}[t]
\centering
\begin{tabular}{lll}
\hline
Systems                     & Objective parameter $\theta_c, \theta_H$ & Dynamic parameter $\theta_f$               \\ \hline
Cartpole                    & \multirow{3}{*}{\begin{tabular}[c]{@{}l@{}}$c(x_i,u_i) = \|\theta_c^\tr (x_i-x_{\text{nom}})\|^2  + \|u_i\|^2$\\ $c_H(x_H) = \|\theta_H^\tr (x_H-x_{\text{nom}})\|^2$\end{tabular}} & length and mass for each link     \\
Two-link robot arm          &                                                                                                                                                                                     & length and mass for each link     \\
6-DoF quadrotor maneuvering &                                                                                                                                                                                     & mass, wing length, inertia matrix \\\hline
\end{tabular}
\caption{Nonlinear experimental settings~\cite{jin2020pontryagin}. }\label{tab:exp_nonlinear}
\end{table*}
\begin{figure*}[t]
    \centering
    \includegraphics[width=0.9\linewidth]{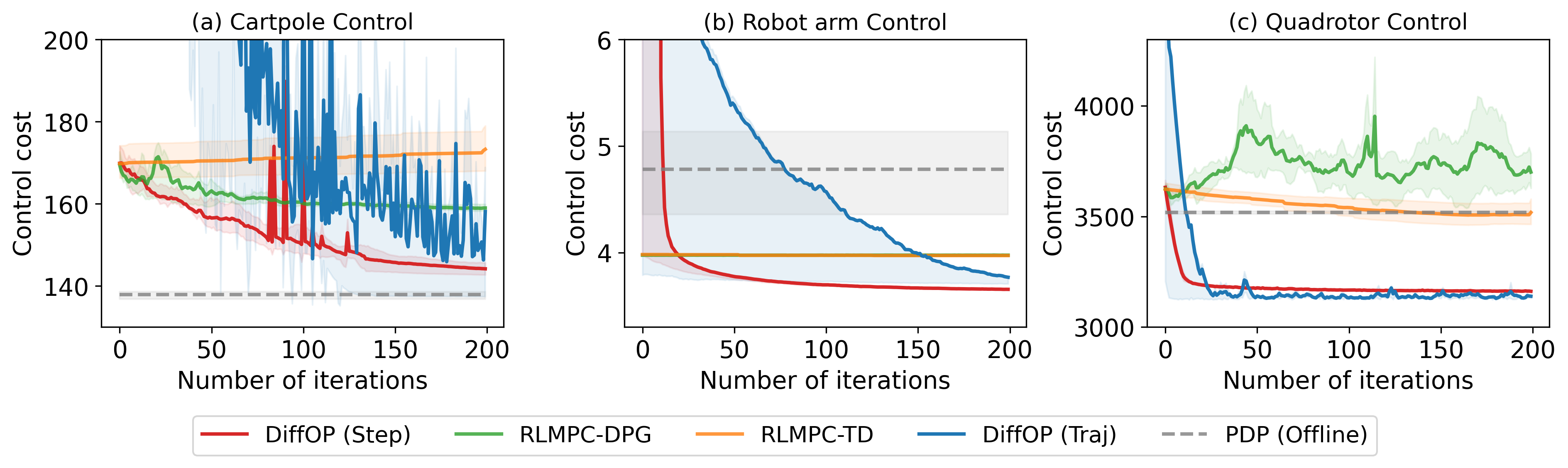}
    \caption{Control cost versus training iteration on nonlinear control tasks. Solid lines indicate the mean cost over 5 runs, and shaded regions denote the 20th to 80th percentile range.}
    \label{fig:nonlinear}
\end{figure*}
\section{Experiments}\label{sec:experiments}

In this section, we evaluate the effectiveness of DiffOP on nonlinear control tasks and a constrained voltage regulation problem. 
Detailed experimental settings are provided in Appendix~\cite{bian2025diffopreinforcementlearningoptimizationbased}. 
%These case studies demonstrate that DiffOP, trained via policy gradients, can efficiently learn both cost and dynamics parameters, reduce control cost, and achieve competitive performance.
% The DiffOP implementation will be released as a standalone, open-source package at Github.

\subsection{Nonlinear Control Tasks}

In this section, we evaluate the proposed DiffOP framework on a set of nonlinear control tasks. We follow the experimental setup introduced in~\cite{jin2020pontryagin}, and summarize the environment details in Table~\ref{tab:exp_nonlinear}.

We compare against the following optimization-based control baselines. In all methods, the forward pass is solved using CasADi optimization solver~\cite{andersson2019casadi}:
\begin{itemize}
    \item \textbf{PDP (Offline)}~\cite{jin2020pontryagin}: Learns optimization parameters via Pontryagin Differentiable Programming from expert trajectories. As a purely offline method, it does not adapt online. Reported performance is taken directly from the original paper.
    
    \item \textbf{RLMPC-TD}~\cite{gros2019data}: A model-free reinforcement learning approach that updates a short-horizon MPC policy using temporal-difference (TD) Q-learning.

    \item \textbf{RLMPC-DPG}~\cite{seel2023combining}: A variant of RLMPC-TD that also employs deterministic policy gradient (DPG) for updating the MPC parameters, with the same short-horizon control structure.
\end{itemize}

We evaluate \textbf{DiffOP} under two deployment modes to highlight its flexibility:

\begin{itemize}
    \item \textbf{DiffOP (Step)}: At each time step, apply only the first action of the optimized trajectory, i.e., a receding-horizon scheme mirrors RLMPC-TD and RLMPC-DPG.

    \item \textbf{DiffOP (Traj)}: Generate the entire optimized control sequence once at the start of the episode and execute it open-loop. This trajectory-optimization setting boosts temporal consistency but sacrifices some robustness due to its open-loop nature.
    % The full optimized control trajectory is generated once at the start of the episode and executed open-loop without replanning. This setting aligns with trajectory optimization-based control, which can achieve improved planning efficiency and temporal consistency at the cost of robustness due to its open-loop nature.
\end{itemize}

For fair comparison, the planning horizon is set to 6 for Cartpole and 12 for RobotArm and Quadrotor across RLMPC-TD, RLMPC-DPG, and DiffOP (Step).  
For all DiffOP experiments, we set \(\beta = 0.01\), \(\eta = 0.1\) and $N = 10$.
RL-based MPC baselines are implemented using the MPC4RL framework~\cite{reinhardt2024mpc4rl}.
Results are presented in Figure~\ref{fig:nonlinear}. 
Compared to RL-based MPC baselines, DiffOP (Step) achieves faster convergence and lower final control cost across all nonlinear control tasks. In particular, it outperforms both RLMPC-TD and RLMPC-DPG which often converge to suboptimal solutions and may fail to consistently reduce control cost.

Compared to PDP (Offline), DiffOP (Traj) consistently achieves lower control cost in Robot Arm and Quadrotor tasks, showing the benefit of policy refinement through online interaction. While PDP relies solely on supervised expert data, DiffOP (Traj) fine-tunes the optimization-based policy using reinforcement learning, enabling it to surpass offline imitation performance and adapt better to the true system dynamics.
These results highlight the effectiveness of DiffOP under both receding-horizon and open-loop settings, and its robustness across control tasks with different levels of nonlinearity and complexity.

\subsection{Voltage Control with Power Injection Constraints}
We further evaluate DiffOP on a practical voltage control scenario involving reactive power constraints, using the IEEE 13-bus radial distribution benchmark.
We adopt the experimental setup of~\cite{feng2023bridging}, using the Pandapower toolbox~\cite{turitsyn2011options} to simulate the power network under two voltage disturbance scenarios: (1) high voltage due to excess PV generation under strong sunlight, and (2) low voltage caused by peak loads with insufficient generation. For each scenario, we vary load profiles to induce deviations of 5\%–15\% from the nominal voltage $v^{\text{nom}}$. The goal of voltage control is to achieve fast voltage restoration to an acceptable range (5\%) with minimal control.

Let \(v_i \in \mathbb{R}^3\) denote the squared voltage magnitudes, \(q_i \in \mathbb{R}^3\) the reactive power injection, and \(u_i \in \R^3\) the control inputs at step $i$. We define the optimization-based control policy for voltage regulation:
\begin{subequations}
\begin{align}
  \arg \min_u \quad & \sum_{i=0}^{H-1} \left( u_i^\top C_u u_i + {(v_i-v^{\text{nom}})}^\top C_v (v_i-v^{\text{nom}}) \right)  \notag \\
   & + {(v_H-v^{\text{nom}})}^\top C_v {(v_H-v^{\text{nom}})} \label{eq:voltage_obj} \\
  \text{s.t.} \quad & v_{i+1} = A q_i + v^{\text{env}},
  q_{i+1} = q_i + u_i, 
  \label{eq:voltage_dynamic}\\
  & \underline{q} \leq q_i \leq \overline{q} .\label{eq:voltage_power_con1}
\end{align}
\end{subequations}
The objective~\eqref{eq:voltage_obj} penalizes both voltage deviation from the nominal value \(v^{\text{nom}}\) and control effort, where \(C_u\) is a fixed parameter reflecting actuation cost.  
Equation~\eqref{eq:voltage_dynamic} models the voltage and reactive power dynamics, where $v^{\text{env}}$ is the initial voltage determined by load profiles. In general, $v^{\text{env}}$ is considered static due to time-scale separation between load change and the voltage control process.
Constraint~\eqref{eq:voltage_power_con1} ensures that the reactive power injection $q_t$ remains within operational limits.
The policy parameters include \(\theta = \{C_v, A\}\), which are learned to optimize both transient and steady-state performance—ensuring fast convergence, low operational cost, and stable voltage profiles.

\begin{table}[t]
\centering
\begin{tabular}{lcc}
\hline 
              & Transient cost & Steady state cost \\ \hline 
DiffOP (Step) &    \textbf{-6.81}       &    \textbf{-0.11}           \\ 
RLMPC-DPG       &      -6.11         &   -0.10       \\
RLMPC-TD        &      -4.62     &       -0.07      \\ \hline
DiffOP (Traj) &    \textbf{-6.80}            &      \textbf{-0.11}            \\
PDP (Offline) &     -5.86           &   -0.09                \\ \hline
Stable-DDPG   &    -5.61           &       -0.09            \\
TASRL         &    -6.76           &     \textbf{ -0.11}             \\ \hline
\end{tabular}
\caption{Performance on 500 scenarios of the IEEE 13-bus system. Lower cost indicates better performance. DiffOP (Traj) and PDP use horizon $H = 30$ with open-loop execution; DiffOP (Step), RLMPC-DPG, and RLMPC-TD use $H = 6$ with first-action execution.} \label{tab:voltage}
\end{table}

\begin{figure}[t]
    \centering
    \begin{minipage}[t]{0.22\textwidth}
        \centering
        \includegraphics[height=5.0cm]{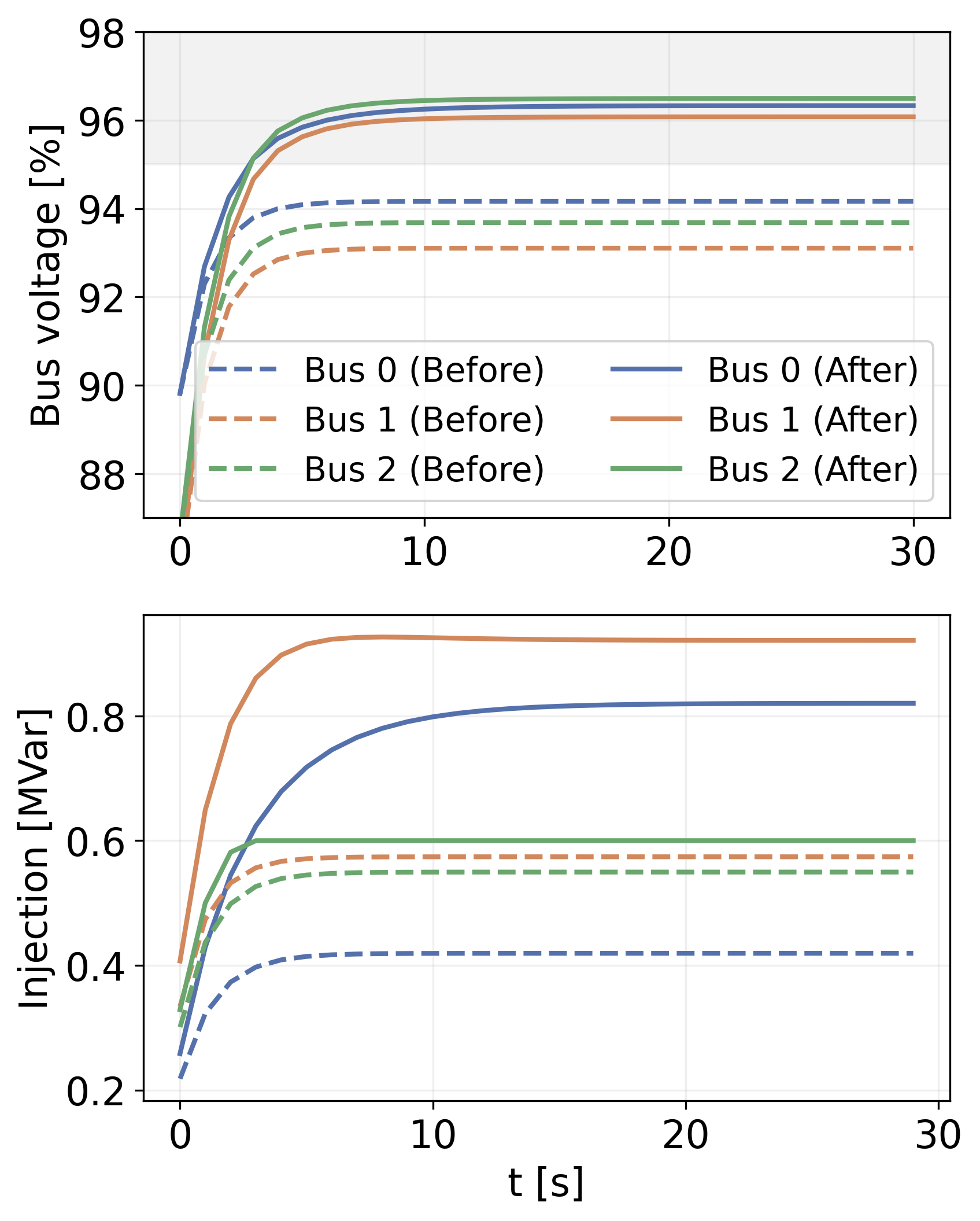}
        \caption*{(a) DiffOP(Traj)}
    \end{minipage}
    \hfill
    \begin{minipage}[t]{0.24\textwidth}
        \centering
        \includegraphics[height=5.0cm]{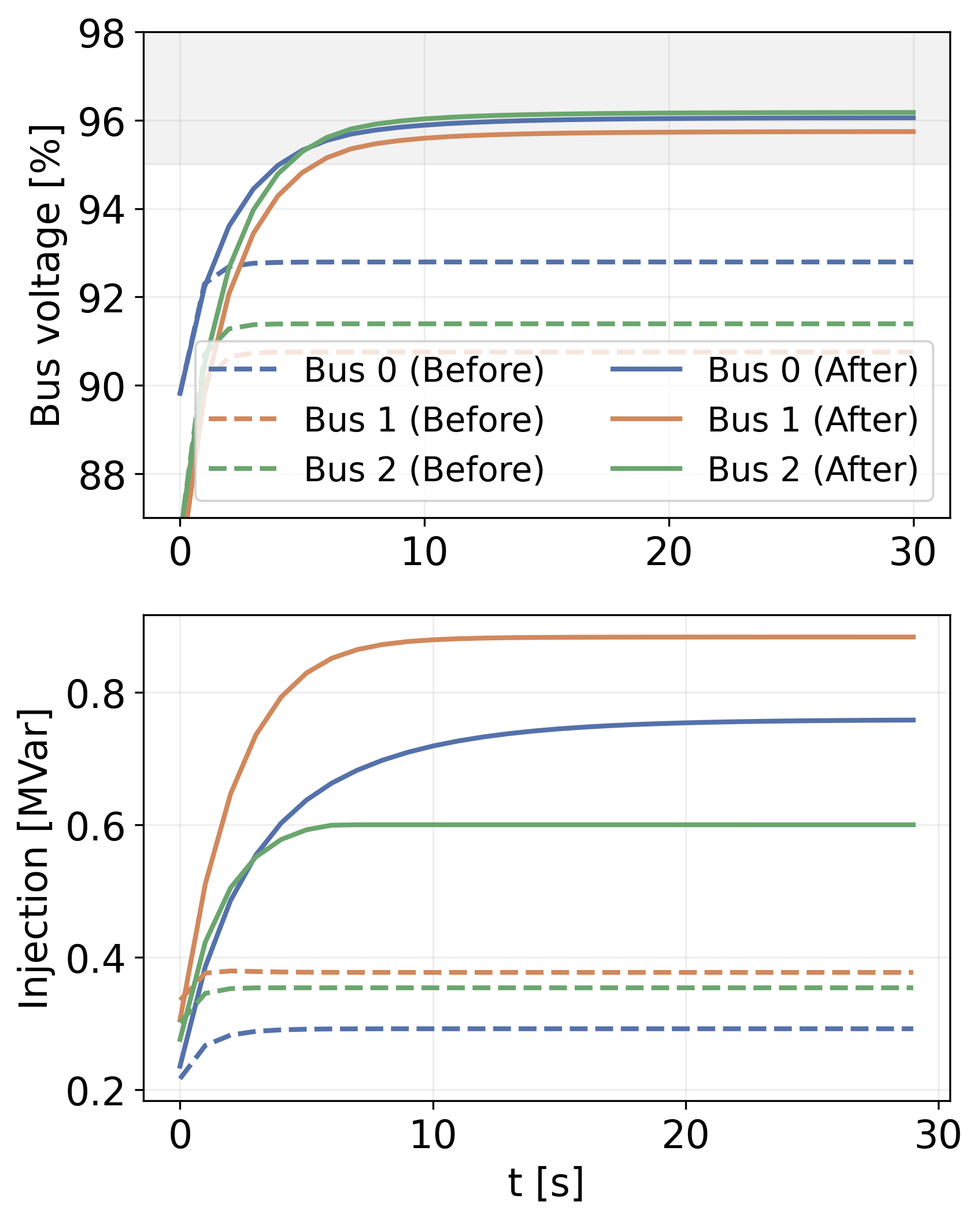}
        \caption*{(b) DiffOP(Step)}
    \end{minipage}
    \caption{Voltage and reactive power trajectories. Solid lines show post-training responses; dashed lines are pre-training performance. Gray shaded areas denote the safe operating bounds for system stability.}
    \label{fig:voltage}
\end{figure}

Our proposed DiffOP is evaluated in comparison with all prior baselines, as well as two state-of-the-art algorithms designed for voltage control:
\begin{itemize}
\item \textbf{Stable-DDPG}~\cite{shi2022stability}: An RL-trained monotone neural policy for voltage control with theoretical voltage restoration guarantees.
% A deep deterministic policy gradient method with theoretical stability guarantees for voltage control.
\item \textbf{TASRL}~\cite{feng2023bridging}: 
% A structured RL approach that incorporates stability constraints and transient-aware gradients for policy optimization. 
An RL-policy that incorporates Stable-DDPG and safe gradient flow to achieve fast voltage restoration while guaranteeing steady-state optimality with limited reactive-power capacity.
\end{itemize}
All optimization-based policies are trained on 10,000 trajectories, each of length $T = 30$ seconds. The planning horizon is set to $H=6$ across RLMPC-TD, RLMPC-DPG, and DiffOP (Step). The overall performance is summarized in Table~\ref{tab:voltage}.
For comparison, we use results reported in~\cite{feng2023bridging} for Stable-DDPG and TASRL.
%, where the cost can be negative due to its special design. 
DiffOP consistently achieves the lowest transient and steady-state costs across all baselines. 
Compared to MPC-based approaches, DiffOP effectively learns both cost and dynamics parameters, achieving the best performance in both transient and steady-state costs. Notably, it also outperforms RL-based neural network policies in transient cost and matches TASRL’s steady-state cost.
%where TASRL has steady-state optimality guarantees.

% —both relying on neural network-based policies—highlighting the potential of structured, optimization-based control to surpass unstructured policy representations.

Further, we visualize the voltage and reactive power injection trajectories in Figure~\ref{fig:voltage}. 
%Solid lines denote post-training responses, while dashed lines indicate performance prior to training. 
After policy optimization, both variants exhibit significantly improved voltage regulation and faster convergence, with all voltage trajectories stabilized within the safe operating bounds (gray regions), in contrast to the untrained policy. 
Notably, our framework naturally supports applying a sequence of actions to the system, rather than being limited to receding-horizon execution. Despite the different planning settings used in DiffOP (Step) and DiffOP (Traj), both variants consistently learn effective control policies.

\section{Conclusion}\label{sec:conclusion}
In this paper, we proposed DiffOP, a novel framework for learning optimization-based control policies via analytical implicit policy gradients.
%We established the first non-asymptotic convergence guarantees for such policies trained with policy gradient methods.
Experiments on nonlinear control and constrained voltage regulation tasks show that DiffOP consistently improves performance and outperforms both model-free and MPC-based baselines.
This work opens the door to several interesting directions for future research. First, we aim to explore improved sample complexity for learning the optimization-based policy. Secondly, it would be of significant interest to further investigate the robustness of DiffOP in comparison to model-free RL approaches. Finally, we envision extending the experimental results of DiffOP to a wide range of real-world deployments.

\bibliography{aaai25}

\appendix
\onecolumn

\setcounter{section}{0}
\renewcommand{\thesection}{\Alph{section}}
\setcounter{secnumdepth}{2} 

\section{Proof of Proposition~\ref{prop:grad_policy}: Gradient of the optimization-based policy}\label{ap:grad_op}
\begin{proof}
    Let $\lambda \in \R^{n_\kappa}$ and denote $\lambda_{i,j}$ as the dual variable corresponding to the $j$-th element of $\kappa_i$. 
    By the method of Lagrange multipliers~\cite{bertsekas2014constrained}, we form the Lagrangian:
    \begin{align*}
        L(\theta, \zeta, \lambda) = J(\zeta; \theta) - \sum_{i=-1}^{H}\sum_{j=1}^{|\kappa_i|} \lambda_{i,j} [\kappa_i(\zeta; \theta)]_j.
    \end{align*}
    Since the $\zeta^\star$ is the optimal solution, we have
    \begin{equation}\label{eq:kkt}
        \begin{bmatrix}
            \nabla_\zeta J(\zeta^\star; \theta) - \sum_{i=-1}^{H}\sum_{j=1}^{|\kappa_i|} \lambda_{i,j} \nabla_\zeta [\kappa_i(\zeta^\star; \theta)]_j \\
            \kappa(\zeta^\star; \theta)
        \end{bmatrix}
         = 0.
    \end{equation}
    For the first row in equation~\eqref{eq:kkt}, we have
    \begin{equation}
        \nabla_\zeta J(\zeta^\star; \theta) = \sum_{i=-1}^{H}\sum_{j=1}^{|\kappa_i|}
        \lambda_{i,j} \nabla_\zeta [\kappa_i(\zeta^\star; \theta)]_j = \lambda^\tr A,
    \end{equation}
    for $A$ defined as $A = \nabla_\zeta \kappa(\zeta^\star; \theta)$. 
    In the following statement, we simplify $[\kappa_i(\zeta^\star;\theta)]_j$ as $[\kappa_i]_j$ and $J(\zeta^\star;\theta)$ as $J$.
Then, differentiating the gradient of the Lagrangian with
respect to $\theta$ we have
\begin{equation}
    \begin{bmatrix}
        \nabla^2_{\theta \zeta} J + \nabla^2_\zeta J \nabla_\theta \zeta^\star - \nabla_\zeta \kappa^\tr \nabla_\theta \lambda - \sum_{i=-1}^{H}\sum_{j=1}^{|\kappa_i|}
        \lambda_{i,j} \left(\nabla_{\theta \zeta}^2 [\kappa_i]_j + \nabla_{\zeta}^2 [\kappa_i]_j \nabla_\theta \zeta^\star \right)  \\
        \nabla_\theta \kappa + \nabla_\zeta \kappa \nabla_\theta \zeta^\star
    \end{bmatrix}
    = 0.
\end{equation}
Therefore, we have
\begin{equation}
    \begin{bmatrix}
       \nabla_\zeta^2 J - \sum_{i=-1}^{H}\sum_{j=1}^{|\kappa_i|}
       \lambda_{ij} \nabla_\zeta^2 [\kappa_i]_j & -\nabla_\zeta \kappa^\tr  \\
        \nabla_\zeta \kappa & 0
    \end{bmatrix}
    \begin{bmatrix}
        \nabla_\theta \zeta^\star \\
        \nabla_\theta \lambda 
    \end{bmatrix}
    = 
    -
    \begin{bmatrix}
        \nabla_{\theta \zeta}^2 J - \sum_{i=-1}^{H}\sum_{j=1}^{|\kappa_i|}
        \lambda_{i,j} \nabla_{\theta \zeta}^2 [\kappa_i]_j \\
        \nabla_\theta \kappa
    \end{bmatrix}
\end{equation}
where all functions are evaluated at $(\zeta^\star, \theta)$. Then we can solve $\nabla_\theta \zeta^\star$ with
\begin{align*}
    \nabla_\theta \zeta^\star = D^{-1}A^\tr (AD^{-1}A^\tr )^{-1} (AD^{-1}B-C)-D^{-1}B.
\end{align*}
Since $\nabla_\theta \zeta^\star = [\nabla_\theta x^\star_0, \nabla_\theta u^\star_0, \ldots, \nabla_\theta x^\star_{H-1},\nabla_\theta u^\star_{H-1}, \nabla_\theta X_H^\star]$. 
After evaluating $\nabla_\theta \zeta^\star$, we have
\begin{align*}
    \nabla_{\theta} u^\star_{i} = [\nabla_{\theta} \zeta^\star]_{(n+m)\cdot i+n:(n+m)\cdot(i+1)}.
\end{align*}
\end{proof}

\section{Proof of Proposition~\ref{prop:2}: Analytical form of policy gradients}\label{ap:1}
Here we proof for proposition~\ref{prop:2}.
\begin{proof}
    With~\eqref{eq:sample}:
    % \begin{align*}
    % u_t \sim \pi_\theta (u|x_t) = \mathcal{N}(u_{t}^\star, \sigma^2 I), 
    % \end{align*}
    \begin{align*}
     u_t \sim \pi_\theta (u|x_t) =  \frac{\phi(u|u_t^\star, \sigma^2 I)}{ \int_{u_t^\star-\beta \sigma^2I}^{u_t^\star+\beta \sigma^2I} \phi(u|u_t^\star, \sigma^2 I) \text{d}u},
    \end{align*}
    we have
    \begin{align*}
        \pi_\theta (u|x_t) = \frac{1}{Z (2\pi)^{\frac{m}{2}}|\sigma^2 I|^{\frac{1}{2}}} \exp{\left(-\frac{1}{2\sigma^2}(u_t-u_t^\star)^\tr (u_t-u_t^\star)\right)},
    \end{align*}
    where $Z$ is a normalization constant, i.e., the integral of the multivariate Gaussian PDF over the truncated range
    \begin{align*}
        Z = \int_{u_t^\star-\beta \sigma^2I}^{u_t^\star+\beta \sigma^2I} \phi(u|u_t^\star, \sigma^2 I) \text{d}u.
    \end{align*}
    The derivative of the log probability is
    \begin{align*}
        \nabla_\theta \log \pi_\theta(u|x_t) = \frac{1}{\sigma^2} [\nabla_\theta u_t^\star]^\tr (u_t - u_t^\star),
    \end{align*}
    where $u_t$ is the action applied to the system at time $t$, $u_t^\star$ is the corresponding optimal solution. In conjunction with $\nabla_\theta C(\theta) = \E [L(\tau) \nabla_\theta \log \pi_\theta(\tau)]$, we obtain
    \begin{align*}
    \nabla_\theta C(\theta) 
    = \E \left[L(\tau) \left(\sum_{t=0}^{T} \frac{1}{\sigma^2}\nabla_\theta {u_{t}^\star}^\tr 
    (u_{t}-u_{t}^\star)\right)\right].
    \end{align*}
    % where $\Delta t$ is the length of applied action sequences, $H = T / \Delta t$， 
\end{proof}

\section{Main Convergence Results of DiffOP}\label{ap:3}

\subsection{Proof of Theorem~\ref{th:converge}}\label{ap:proof_th1}
To prove Theorem~\ref{th:converge}, we first show that bounded policy derivatives imply the \(L\)-smoothness of the objective \(C(\theta)\), and then establish the convergence of the proposed policy gradient algorithm.

\begin{lemma}[Smoothness of $C(\theta)$]\label{lm:smooth_c}
    Suppose Assumptions~\ref{as:1} and \ref{as:2} hold. Then, we have, for any $\theta, \theta' \in \mathcal{G}_\theta$,
    $$\|\nabla_\theta C(\theta) -\nabla_\theta C(\theta')\| \leq L_C\|\theta-\theta'\|,$$
    where the constant $L_C$ is given by
    \begin{align*}
        L_C = M_LT(\sqrt{m}\beta M_{u2} + \frac{1}{\sigma^2}M_{u1}^2 + Tm\beta^2M_{u1}^2).
    \end{align*}
\end{lemma}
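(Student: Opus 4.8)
The plan is to bound the spectral norm of the Hessian $\nabla_\theta^2 C(\theta)$ uniformly over $\mathcal{G}_\theta$. Since $\mathcal{G}_\theta=\{\theta:\|\theta-\theta^{(0)}\|^2\le\Delta_0\}$ is a ball, hence convex, a uniform bound $\|\nabla_\theta^2 C(\theta)\|\le L_C$ yields the claimed Lipschitz continuity of $\nabla_\theta C$ by integrating along the segment joining $\theta$ and $\theta'$ (mean-value inequality). So everything reduces to estimating $\|\nabla_\theta^2 C(\theta)\|$.

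First I would start from the score-function representation $\nabla_\theta C(\theta)=\E[L(\tau)\nabla_\theta\log p_\theta(\tau)]$ underlying Proposition~\ref{prop:2} and differentiate once more. Using $\nabla_\theta p_\theta=p_\theta\nabla_\theta\log p_\theta$ and the fact that $L(\tau)$ does not depend on $\theta$, I obtain
\[
\nabla_\theta^2 C(\theta)=\E\Big[L(\tau)\big(\nabla_\theta\log p_\theta(\tau)\,\nabla_\theta\log p_\theta(\tau)^\tr+\nabla_\theta^2\log p_\theta(\tau)\big)\Big].
\]
The interchange of $\nabla_\theta$ and $\E$ is justified by dominated convergence, as every quantity involved is continuous in $\theta$ on the compact set $\mathcal{Z}$ and hence uniformly bounded. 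Taking norms and using $L(\tau)\le M$ from Assumption~\ref{as:3} gives
\[
\|\nabla_\theta^2 C(\theta)\|\le M\Big(\E\|\nabla_\theta\log p_\theta(\tau)\|^2+\E\|\nabla_\theta^2\log p_\theta(\tau)\|\Big),
\]
so it remains to control the first and second $\theta$-derivatives of the trajectory log-likelihood.

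Next, because the true dynamics do not depend on $\theta$, the score telescopes to $\nabla_\theta\log p_\theta(\tau)=\sum_{t=0}^T\nabla_\theta\log\pi_\theta(u_t|x_t)$. A key observation is that the truncated-Gaussian normalizer $Z$ in~\eqref{eq:sample} is \emph{independent} of $\theta$: the window is symmetric about the mean $u_t^\star(\theta)$, so after the substitution $v=u-u_t^\star(\theta)$ the integral defining $Z$ no longer sees $\theta$. Hence $\nabla_\theta\log\pi_\theta(u_t|x_t)=\frac{1}{\sigma^2}\nabla_\theta u_t^\star(\theta)^\tr(u_t-u_t^\star)$ exactly, and differentiating once more,
\[
\nabla_\theta^2\log\pi_\theta(u_t|x_t)=\frac{1}{\sigma^2}\Big(\nabla_\theta^2 u_t^\star(\theta)\,[u_t-u_t^\star]-\nabla_\theta u_t^\star(\theta)\,\nabla_\theta u_t^\star(\theta)^\tr\Big).
\]
I would then bound each factor using Lemma~\ref{lm:u}, namely $\|\nabla_\theta u_t^\star\|\le L_1/\mu$ and $\|\nabla_\theta^2 u_t^\star\|\le\frac{L_2\mu^2+L_1L_2\mu+L_1L_3\mu+L_1^2L_3}{\mu^3}$, together with the truncation bound $\|u_t-u_t^\star\|\le\sqrt{m}\beta\sigma^2$ (each coordinate deviates by at most $\beta\sigma^2$ over $m$ coordinates). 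This yields the per-step estimates $\|\nabla_\theta\log\pi_\theta(u_t|x_t)\|\le\frac{\sqrt{m}\beta L_1}{\mu}$ and $\|\nabla_\theta^2\log\pi_\theta(u_t|x_t)\|\le\sqrt{m}\beta\frac{L_2\mu^2+L_1L_2\mu+L_1L_3\mu+L_1^2L_3}{\mu^3}+\frac{L_1^2}{\mu^2\sigma^2}$.

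Finally, summing over the horizon, the first-order score obeys $\|\nabla_\theta\log p_\theta(\tau)\|\le T\frac{\sqrt{m}\beta L_1}{\mu}$, whose square contributes the term $\frac{m\beta^2L_1^2T^2}{\mu^2}$, while the second-order score sums to $\frac{\sqrt{m}\beta T(L_2\mu^2+L_1L_2\mu+L_1L_3\mu+L_1^2L_3)}{\mu^3}+\frac{L_1^2T}{\mu^2\sigma^2}$; multiplying by $M$ reproduces $L_C$ exactly. The main obstacle I anticipate is not the algebra but the correct handling of the second-order score term: one must verify that the $\theta$-dependence of the truncation normalizer genuinely drops out (otherwise spurious $\nabla_\theta\log Z$ contributions appear), and one must invoke the second-derivative bound on the implicit optimizer $u^\star$ from Lemma~\ref{lm:u}, which is the part specific to the optimization-based policy rather than a generic REINFORCE computation. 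A secondary technical point is justifying the exchange of differentiation and expectation, which relies on the compactness of $\mathcal{Z}$.
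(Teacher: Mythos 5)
Your proposal is correct and follows essentially the same route as the paper's proof: bound $\|\nabla_\theta^2 C(\theta)\|$ via the score-function representation, control the per-step score and its derivative using the truncation bound $\|u_t-u_t^\star\|\le\sqrt{m}\beta\sigma^2$ together with the two estimates from Lemma~\ref{lm:u}, sum over the horizon, and convert the Hessian bound into Lipschitz continuity of the gradient on the convex set $\mathcal{G}_\theta$. Your explicit observation that the truncation normalizer $Z$ is $\theta$-independent (so no $\nabla_\theta\log Z$ term appears) is a point the paper's proof uses implicitly without comment, and is a worthwhile clarification.
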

\begin{proof}
    We begin by establishing the boundedness of the gradient $\|\nabla_\theta \log \pi_\theta(u_t|x_t)\|$ and the Hessian $\|\nabla_\theta^2 \log \pi_\theta(u_t|x_t)\|$ w.r.t. $\theta$.
   
    Recall that
    % \begin{align*}
    %     \pi_\theta (u|x_\text{init}) = \frac{1}{(2\pi)^{mH}|\sigma^2 I|^{\frac{1}{2}}} \exp{\left(-\frac{1}{2\sigma^2}(u-u^\star(\theta))^\tr (u-u^\star(\theta))\right)},
    % \end{align*}
    \begin{align*}
        \pi_\theta (u_t|x_t) = \frac{1}{Z(2\pi)^{\frac{m}{2}}|\sigma^2 I|^{\frac{1}{2}}} \exp{\left(-\frac{1}{2\sigma^2}(u_t-u_t^\star(\theta))^\tr (u_t-u_t^\star(\theta))\right)},
    \end{align*}
    the gradient and the hessian are
    \begin{equation}
    \begin{aligned}
       & \nabla_\theta \log \pi_\theta(u_t|x_t) = \frac{1}{\sigma^2}\nabla_\theta u_t^\star(\theta)^\tr (u_t-u_t^\star(\theta)), \\
       & \nabla_\theta^2 \log \pi(u_t|x_t) = \frac{1}{\sigma^2}(\nabla_\theta^2 u_t^\star(\theta)^\tr (u_t-u_t^\star(\theta)) - \nabla_\theta u_t^\star(\theta)^\tr \nabla_\theta u_t^\star(\theta)).
    \end{aligned}
    \end{equation}
    Recall that as we use the truncated Gaussian policy, we have
    \begin{equation}
    \begin{aligned}
        &  -\beta \sigma^2 \leq [u_t^\star]_i - [u_t]_i  \leq \beta \sigma^2,  i = 1, \ldots, m \\
    \Leftrightarrow \quad   &         \|u_t - u_t^\star(\theta)\| \leq \sqrt{m} \beta \sigma^2 .
    \end{aligned}
    \end{equation}
    In conjunction with Assumption~\ref{as:bound}, we obtain
    \begin{equation}\label{eq:1glog}
        \|\nabla_\theta \log \pi_\theta(u_t|x_t) \| \leq 
        \frac{\sqrt{m}\beta \sigma^2}{\sigma^2} \|\nabla_\theta u_t^\star(\theta) \|  \leq
        \sqrt{m}\beta  M_{u1},
    \end{equation}
    and
    \begin{equation}\label{eq:2glog}
    \begin{aligned}
                \|\nabla_\theta^2 \log \pi_\theta (u_t|x_t) \| 
    &   \leq \frac{1}{\sigma^2} (\sqrt{m}\beta \sigma^2 \|\nabla_\theta^2 u_t^\star(\theta)\| + \|\nabla_\theta u_t^\star(\theta)\|^2) 
 \leq    
        \sqrt{m}\beta M_{u2} + \frac{1}{\sigma^2} M_{u1}^2.
    \end{aligned}
    \end{equation}

    Recall that $C(\theta) = \E_\tau [L(\tau)]$, we have 
    \begin{equation}\label{eq:1gC}
        \nabla_\theta C(\theta) = \int_\tau L(\tau) \pi_\theta(\tau) \nabla_\theta \log \pi_\theta(\tau) \text{d}\tau.
    \end{equation}
    By taking the derivative of~\eqref{eq:1gC}, we obtain
    \begin{equation}\label{eq:23}
        \nabla_\theta^2 C(\theta) = \int_\tau \left(L(\tau)\pi_\theta (\tau)\nabla^2_\theta \log \pi_\theta(\tau) + L(\tau)\pi_\theta(\tau) \nabla_\theta \log \pi_\theta(\tau) \nabla_\theta \log \pi_\theta(\tau)^\tr\right) \text{d}\tau 
    \end{equation}
    In this case, 
    %$\nabla_\theta \log \pi_\theta(\tau) = \nabla_\theta \log \pi_\theta(u|x_1)$,
    %$\nabla_\theta^2 \log \pi_\theta(\tau) = \nabla_\theta^2 \log \pi_\theta(u|x_1)$.
    \begin{align*}
        \nabla_\theta \log \pi_\theta(\tau) =  \sum_{t=1}^{T} 
        \nabla_\theta \log \pi_\theta(u_t|x_{t}), \nabla_\theta^2 \log \pi_\theta(\tau) =  \sum_{t=1}^{T} 
       \nabla_\theta^2 \log \pi_\theta(u_t|x_t).
    \end{align*}
    With equation~\eqref{eq:23} and Assumption~\ref{as:3}, we have
    \begin{equation}
    \begin{aligned}
        \|\nabla^2_\theta C(\theta)\|
        & \leq M_L 
        \max (\|\nabla_\theta^2 \log \pi_\theta(\tau)\| + \|\nabla_\theta \log \pi_\theta(\tau)\|^2) \underbrace{\int_\tau \pi_\theta(\tau) \text{d}\tau}_{=1}  \leq M_L T\left( \sqrt{m}\beta M_{u2} + \frac{1}{\sigma^2} M_{u1}^2 + Tm\beta^2 M_{u1}^2  \right).
    \end{aligned}
    \end{equation}
    Since the Hessian is bounded, for any $\theta, \theta' \in \mathcal{G}_\theta$, $C(\theta)$ satisfy
    \begin{equation}
       \| \nabla_\theta C(\theta) - \nabla_\theta C(\theta')\| \leq L_C \|\theta - \theta'\|,
    \end{equation}
    where
    \begin{align*}
        L_C = M_L T\left( \sqrt{m}\beta M_{u2} + \frac{1}{\sigma^2} M_{u1}^2 + Tm\beta^2 M_{u1}^2   \right).
    \end{align*}
    
\end{proof}

We then characterize the gradient estimation error $\|\widehat{\nabla}_\theta C(\theta^{(k)}) - \nabla_\theta C(\theta^{(k)})\|$, where $\theta^{(k)}$ is the policy parameter at $k$-th iteration and $\widehat{\nabla}_\theta C$ is estimated by the REINFORCE estimator (score function gradient estimator).

\begin{lemma}\label{lm:pg}
    Suppose Assumptions~\ref{as:1}, \ref{as:2} and \ref{as:3} hold. Then given $e_{grad}$, for any $\nu \in (0,1)$, when $N \geq \frac{2m\beta^2M_L^2T^2M_{u1}^2}{e_{grad}^2} \log \frac{2d}{\nu}$, then with probability at least $1-\nu$,
    \begin{equation}
        \|\widehat{\nabla}_\theta C(\theta^{(k)}) - \nabla_\theta C(\theta^{(k)})\|^2 \leq e_{grad}.
    \end{equation}
\end{lemma}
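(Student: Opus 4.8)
The plan is to recognize the REINFORCE estimator $\widehat{\nabla}_\theta C(\theta^{(k)})$ as an empirical average of $N$ i.i.d.\ per-trajectory gradient estimators and to control its deviation from the true gradient $\nabla_\theta C(\theta^{(k)}) = \E[L(\tau)\nabla_\theta \log \pi_\theta(\tau)]$ via a concentration-of-measure argument for bounded random vectors. Concretely, I would set $g^{(n)} := L(\tau^{(n)})\nabla_\theta \log \pi_\theta(\tau^{(n)})$ for each sampled trajectory $\tau^{(n)}$, so that $\widehat{\nabla}_\theta C(\theta^{(k)}) = \frac{1}{N}\sum_{n=1}^N g^{(n)}$ and $\E[g^{(n)}] = \nabla_\theta C(\theta^{(k)})$ by Proposition~\ref{prop:2}. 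The summands are independent because each trajectory is generated by an independent rollout of $\pi_{\theta^{(k)}}$.

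First I would establish a uniform deterministic bound on each summand $\|g^{(n)}\|$. Assumption~\ref{as:3} gives $L(\tau) \le M$, and the per-step score bound derived inside the proof of Lemma~\ref{lm:smooth_c} (equation~\eqref{eq:1glog}) gives $\|\nabla_\theta \log \pi_\theta(u_t|x_t)\| \le \sqrt{m}\beta L_1/\mu$. Summing the score over the $T$ time steps, $\nabla_\theta \log \pi_\theta(\tau) = \sum_{t} \nabla_\theta \log \pi_\theta(u_t|x_t)$, and applying the triangle inequality yields $\|\nabla_\theta \log \pi_\theta(\tau)\| \le \sqrt{m}\beta L_1 T/\mu$. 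Combining the two estimates gives the key bound $\|g^{(n)}\| \le B := \sqrt{m}\beta L_1 T M/\mu$, so in particular $B^2 = m\beta^2 L_1^2 T^2 M^2/\mu^2$, which is exactly the numerator appearing in the claimed sample size.

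With boundedness in hand, I would invoke Hoeffding's inequality coordinatewise: for each coordinate $i \in \{1,\dots,d\}$ the scalars $[g^{(n)}]_i$ are i.i.d.\ and lie in $[-B, B]$, so
\begin{equation*}
\P\!\left(\left|[\widehat{\nabla}_\theta C(\theta^{(k)})]_i - [\nabla_\theta C(\theta^{(k)})]_i\right| \ge t\right) \le 2\exp\!\left(-\frac{N t^2}{2B^2}\right).
\end{equation*}
A union bound over the $d$ coordinates bounds the failure probability by $2d\exp(-N t^2/(2B^2))$; setting this equal to $\nu$ and identifying the target accuracy $t$ with $e_{grad}$ yields $N = \frac{2B^2}{e_{grad}^2}\log\frac{2d}{\nu} = \frac{2m\beta^2 M^2 T^2 L_1^2}{e_{grad}^2\mu^2}\log\frac{2d}{\nu}$, matching the stated threshold, and on the complementary event of probability at least $1-\nu$ every coordinate, and hence the estimation error, satisfies the claimed bound.

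The main obstacle is the passage from the scalar Hoeffding tail to a guarantee on the full vector-valued error. I would handle it by the coordinatewise-plus-union-bound route above, which is precisely what produces the $\log(2d/\nu)$ factor (the $2$ from the two-sided Hoeffding tail and the $d$ from the union over coordinates); a dimension-free vector Hoeffding/Bernstein inequality is an alternative but would not reproduce the stated constants. A secondary point requiring care is that the score-function bound $\|\nabla_\theta \log \pi_\theta(u_t|x_t)\| \le \sqrt m \beta L_1/\mu$ must hold uniformly over the iterate $\theta^{(k)} \in \mathcal{G}_\theta$; this is guaranteed because Lemma~\ref{lm:u} bounds $\|\nabla_\theta u^\star(\theta)\|$ uniformly on the compact set $\mathcal{Z}$ under Assumptions~\ref{as:1}--\ref{as:2}, together with the truncated-Gaussian bound $\|u_t - u_t^\star(\theta)\| \le \sqrt m \beta\sigma^2$.
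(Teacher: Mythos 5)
Your proposal follows essentially the same route as the paper's proof: write $\widehat{\nabla}_\theta C(\theta^{(k)})$ as an empirical mean of i.i.d.\ per-trajectory terms $L(\tau^{(n)})\nabla_\theta\log\pi_\theta(\tau^{(n)})$, bound each term uniformly by $\sqrt{m}\beta M T L_1/\mu$ using the truncated-Gaussian score bound and $L(\tau)\le M$, and conclude by Hoeffding's inequality; your version is in fact more explicit than the paper's, which simply invokes ``Hoeffding's bound'' without spelling out the coordinatewise application and union bound that produce the $\log(2d/\nu)$ factor.

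One caveat, which your write-up shares with the paper rather than introducing: the final passage from ``every coordinate deviates by at most $e_{grad}$'' to ``$\|\widehat{\nabla}_\theta C - \nabla_\theta C\|^2 \le e_{grad}$'' is not literal. Per-coordinate accuracy $t=e_{grad}$ only gives $\|\widehat{\nabla}_\theta C - \nabla_\theta C\|^2 \le d\,e_{grad}^2$, so matching the squared-norm guarantee in the lemma as stated would require choosing $t=\sqrt{e_{grad}/d}$ and hence an extra dimension factor in $N$ (or, alternatively, reading the conclusion as a bound on the unsquared norm). Since the paper's own proof elides exactly the same step, this is a shared imprecision in the constants rather than a defect specific to your argument.
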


\begin{proof}
    % We have $\nabla_{\theta} C(\theta) = \E [L(\tau) \nabla \log \pi_\theta (\tau)]$, and the policy gradient estimator is calculated as $\widehat{\nabla}_\theta C(\theta) = \frac{1}{N}\sum_{i=1}^N c(L_2_i) \nabla \log \pi_\theta (L_2_i).$ 
    Recall that
    \begin{align*}
        \nabla_\theta C(\theta) = \E[L(\tau)\nabla_\theta \log \pi_\theta(\tau)], \,\,
        \widehat{\nabla}_\theta C(\theta) = \frac{1}{N}\sum_{i=1}^N L(\tau^{(i)}) \nabla_\theta \log \pi_\theta (\tau^{(i)}).
    \end{align*}
    Let $X_i =L(\tau^{(i)})  \nabla_\theta \log \pi_\theta (\tau^{(i)})$.
    %$$- \nabla_\theta C(\theta)$.
    We have
    \begin{equation}
        \|X_i\|  \leq \|L(\tau^{(i)}) \nabla_\theta \log \pi_\theta (\tau^{(i)}) \| %+ \|\nabla_\theta C(\theta)\| \\
         \leq \|L(\tau^{(i)}\|\|\nabla_\theta \log \pi_\theta (\tau^{(i)})\| % + \frac{\sqrt{m}\beta \sigma^2ML_1}{\mu \sigma^2} 
        \leq M_L \cdot \sqrt{m}\beta T M_{u1},
    \end{equation}
    where the third inequality we used equation~\eqref{eq:1glog}.
    % and
    % \begin{equation}
    %     \begin{aligned}
    %         \sum_{i=1}^N \E [\|X_i\|^2] & = \sum_{i=1}^N \text{Var}( \nabla_\theta \log \pi_\theta (L_2_i) C(L_2_i)) \\
    %         & \leq \sum_{i=1}^N \text{Var}(\nabla_\theta \log \pi_\theta (L_2_i))\E[(C(L_2_i))^2] + \text{Var}(C(L_2_i))\E[(\nabla_\theta \log \pi_\theta (L_2_i))^2] \\
    %         & \leq N(\sigma_1^2M_2^2 + \sigma_2^2 M_1^2).
    %     \end{aligned}
    % \end{equation}
    
    Choose $N \geq \frac{2m\beta^2M_L^2T^2M_{u1}^2}{e_{grad}^2} \log \frac{2d}{\nu}$, 
    and by Hoeffding's bound, with probability at least $1-\nu$,
    \begin{equation}
        \|\widehat{\nabla}_\theta C(\theta^{(k)}) - \nabla_\theta C(\theta^{(k)})\|^2 \leq e_{grad}.
    \end{equation}
\end{proof}

% \fbox{
% Hoeffding bound let $\|X_i\|\leq M$,
% $\mathbb{P}\left(\left|\frac{1}{N}\sum_{i=1}^N \| X_i - \mathbb{E}[X]\|\right| \leq \epsilon\right) \geq 1 - 2d \exp\left(-\frac{N\epsilon^2}{2M^2}\right).$
% }

We now can provide proof for Theorem~\ref{th:converge}.
\begin{proof}[Proof for Theorem~\ref{th:converge}]
    Let $\F_k$ be the filtration generated by $\{\widehat{\nabla}_\theta C(\theta^{(k')})\}_{k'=0}^{k-1}.$ Then we have $\theta^{(k)}$ is $\F_k$ measurable. We define the following event,
    \begin{equation}
        \mathcal{E}_k = \{\|\widehat{\nabla}_\theta C(\theta^{(k')}) - \nabla_\theta  C(\theta^{(k')})\|^2 \leq e_{grad}, \forall k' = 0, 1, \ldots, k-1\},
    \end{equation}
    and $\mathcal{E}_k$ is also $\F_k$-measurable. By lemma~\ref{lm:pg} and selection of $N \geq \frac{2m\beta^2M_L^2T^2M_{u1}^2}{e_{grad}^2} \log \frac{2Kd}{\nu}$, 
    we have $\|\widehat{\nabla}_\theta C(\theta^{(k)}) - \nabla_\theta C(\theta^{(k)})\|^2 \leq e_{grad}$ with probability at least $1-\frac{\nu}{K}$. Then we have
    \begin{equation}
        \E[\textbf{1}(\mathcal{E}_{k+1}|\F_k)\mathbf{1}(\mathcal{E}_k)] \geq (1-\frac{\nu}{K})\mathbf{1}(\mathcal{E}_k).
    \end{equation}
    %which shows that when conditioned on $\mathcal{E}_k$ is true, $\mathcal{E}_{k+1}$ happens with high probability.
    Take expectation on both side, we obtain
    \begin{equation}
        \P(\mathcal{E}_{k+1}) = \P(\mathcal{E}_{k+1} \cap \mathcal{E}_k) = \E[\E[\textbf{1}(\mathcal{E}_{k+1}|\F_k)\mathbf{1}(\mathcal{E}_k)]] \geq (1-\frac{\nu}{K})\P(\mathcal{E}_k).
    \end{equation}
    As a result, we have, $\P(\mathcal{E}_K) \geq (1-\frac{\nu}{K})^K \P(\mathcal{E}_0) > 1-\nu$. Recall that
    \begin{align*}
        \theta^{(k+1)} =  \theta^{(k)} - \eta \widehat{\nabla}_\theta C(\theta^{(k)}).
    \end{align*}
    On the event $\mathcal{E}_K$ and based on the smoothness of the function $C(\theta)$ established in Lemma~\ref{lm:smooth_c}, we have
    \begin{align*}
        C(\theta^{(k+1)}) & \leq C(\theta^{(k)}) + \underbrace{\left<\nabla_\theta C(\theta^{(k)}), \theta^{(k+1)}-\theta^{(k)}\right>}_{e_1} + \underbrace{\frac{L_C}{2}\|\theta^{(k+1)}-\theta^{(k)}\|^2}_{e_2}
    \end{align*}
    For $e_1$, we have
    \begin{align*}
        e_1 & = \left< \nabla_\theta C(\theta^{(k)}), - \eta \widehat{\nabla}_\theta C(\theta^{(k)} \right> \\
        & \leq -\eta \left< \nabla_\theta C(\theta^{(k)}),  \widehat{\nabla}_\theta C(\theta^{(k)} \right> + \frac{\eta}{2} \|\nabla_\theta C(\theta^{(k)})\|^2 \\
        & = -\eta \left< \nabla_\theta C(\theta^{(k)}),  \widehat{\nabla}_\theta C(\theta^{(k)} \right> + \frac{\eta}{2} \|\nabla_\theta C(\theta^{(k)})\|^2 + \frac{\eta}{2} \|\widehat{\nabla}_\theta C(\theta^{(k)})\|^2 - \frac{\eta}{2} \|\widehat{\nabla}_\theta C(\theta^{(k)})\|^2 \\
        & = \frac{\eta}{2} \|\widehat{\nabla}_\theta C(\theta^{(k)}) - {\nabla}_\theta C(\theta^{(k)})\|^2 - \frac{\eta}{2} \|\widehat{\nabla}_\theta C(\theta^{(k)})\|^2 
    \end{align*}
    For $e_2$, we have
    \begin{align*}
        e_2 & = \frac{\eta^2 L_C}{2} \|\widehat{\nabla}_\theta C(\theta^{(k)})\|^2 \\
        & \leq \frac{\eta^2 L_C}{2} \left(\|\widehat{\nabla}_\theta C(\theta^{(k)})\|^2 + \|\widehat{\nabla}_\theta C(\theta^{(k)}) - 2 \nabla_\theta C(\theta^{(k)})\|^2\right) \\
        & = \frac{\eta^2 L_C}{2} \left(2\|\widehat{\nabla}_\theta C(\theta^{(k)})\|^2+ 4 \|\nabla_\theta C(\theta^{(k)})\|^2 -4\left<\widehat{\nabla}_\theta C(\theta^{(k)}), {\nabla}_\theta C(\theta^{(k)})\right>\right) \\
        & = \frac{\eta^2 L_C}{2} \left(2\|\nabla_\theta C(\theta^{(k)})\|^2 + 2 \|\widehat{\nabla}_\theta C(\theta^{(k)}) - {\nabla}_\theta C(\theta^{(k)})\|^2\right).
    \end{align*}
    Then we have
    \begin{equation}
    \begin{aligned}
        C(\theta^{(k+1)}) & \leq C(\theta^{(k)}) + e_1 + e_2 \\
        % & \leq C(\theta^{(k)}) - \eta \left<\nabla_\theta C(\theta^{(k)}), \widehat{\nabla}_\theta C(\theta^{(k)}) - \nabla_\theta C(\theta^{(k)}) \right> - \eta \|\nabla_\theta C(\theta^{(k)})\|^2 \\
        % & \quad + \eta^2 L_C\|\nabla_\theta C(\theta^{(k)})\|^2 + \eta^2 L_C\|\widehat{\nabla}_\theta C(\theta^{(k)}) - \nabla_\theta C(\theta^{(k)})\|^2 \\
        & \leq C(\theta^{(k)}) - (\frac{\eta}{2}-\eta^2L_C)\|\nabla_\theta C(\theta^{(k)})\|^2 +(\frac{\eta}{2}+\eta^2L_C)\|\widehat{\nabla}_\theta C(\theta^{(k)}) - \nabla_\theta C(\theta^{(k)})\|^2,
    \end{aligned}
    \end{equation}
    which, combined with Lemma~\ref{lm:pg}, yields that with probability at least $1-\nu$, 
    \begin{equation}\label{eq:c_smooth2}
        C(\theta^{(k+1)}) \leq C(\theta^{(k)}) - (\frac{\eta}{2}-\eta^2L_C)\|\nabla_\theta C(\theta^{(k)})\|^2 +(\frac{\eta}{2}+\eta^2L_C)e_{grad}.
    \end{equation}
    Telescoping equation~\eqref{eq:c_smooth2} over $k$ from $0$ to $K-1$ yields
    \begin{equation}\label{eq:31}
       \frac{1}{K} (\frac{1}{2}-\eta L_C) \sum_{k=0}^{K-1} \|\nabla_\theta C(\theta^{(k)})\|^2 \leq \frac{C(\theta^{(0)}) - C(\theta^\star)}{\eta K} + (\frac{1}{2} + \eta L_C) e_{grad}.
    \end{equation}
    Substituting $\eta = \frac{1}{4L_C}, \epsilon=e_{grad}$ in equation~\eqref{eq:31} yields
    \begin{equation}
        \frac{1}{K} \sum_{k=0}^{K-1} \|\nabla C_\theta (\theta^{(k)})\|^2 \leq \frac{16L_C(C(\theta^{(0)}) -C(\theta^\star))}{K} + 3\epsilon.
    \end{equation}
    This leads to
    \begin{equation}
        \min_k \|\nabla C_\theta (\theta^{(k)})\|^2 \leq \frac{16L_C(C(\theta^{(0)}) -C(\theta^\star))}{K} + 3\epsilon.
    \end{equation}
\end{proof}

\subsection{Proof of Proposition~\ref{prop:bounded_sensitivity} }\label{ap:proof_prop3}
In what follows, we make the parameter dependence explicit by writing \( u^\star(\theta) \) instead of \( u^\star \), to facilitate gradient analysis. We then characterize the Lipschitz properties of \( u^\star(\theta) \) and its gradient \( \nabla_\theta u^\star(\theta) \), which establishes Proposition~\ref{prop:bounded_sensitivity}.

% Now we characterize the Lipschitz properties of $u^\star(\theta)$ and $\nabla_\theta u^\star(\theta)$ in Lemma~\ref{lm:u}. 

% Here, we characterize the Lipschitz properties of $u^\star(\theta)$ and $\nabla_\theta u^\star(\theta)$.
% \begin{lemma}\label{lm:u}
%     Suppose Assumptions~\ref{as:1} and \ref{as:2} hold and define the implicit function $u^\star: \theta \rightarrow u^\star(\theta), \forall \theta \in \mathcal{G}_\theta$. We have 
%     %$\|\nabla_{\theta} u^\star(\theta)\| \leq \frac{L_1}{\mu}$.
%     \begin{itemize}
%         \item $\|\nabla_{\theta} u^\star(\theta)\| \leq \frac{L_1}{\mu}$.
%         \item $\|\nabla^2_{\theta} u^\star(\theta)\| \leq \frac{L_2}{\mu} + \frac{L_1L_2 + L_1L_3}{\mu^2} + \frac{L_1^2L_3}{\mu^3}$.
%         %\item $\nabla_\theta u(\theta)$ is $L_{\nabla u}$-Lipschitz with $L_{\nabla u} = \frac{L_2}{\mu} + \frac{L_1 L_3}{\mu^2}$.
%     \end{itemize}
% \end{lemma}
\begin{proof}
The optimization policy, i.e., implicit function is defined as
\begin{align*}
    u^\star(\theta) = \min_u J(x_{\text{init}}, u, \theta).
\end{align*}
Due to the optimality condition, we have $\nabla_u J(x_{\text{init}}, u^\star(\theta), \theta) = 0.$ By taking derivative on both sides, using the chain rule and the implicit function theorem~\cite{rudin1964principles}, we obtain
\begin{align*}
    \nabla_\theta \nabla_u J(x_{\text{init}}, u^\star(\theta), \theta) + \nabla^2_{u} J(x_{\text{init}}, u^\star(\theta), \theta) \nabla_\theta u^\star(\theta) = 0.
\end{align*}
Let $z^\star = (x_{\text{init}}, u^\star(\theta), \theta)$, we have
\begin{equation}\label{eq:utheta}
    \nabla_\theta u^\star(\theta) = -[\nabla_u^2 J(z^\star)]^{-1} [\nabla_\theta \nabla_u J(z^\star)].
\end{equation}
    As $J(x_{\text{init}}, u, \theta)$ is $\mu$-strongly convex w.r.t. $u$, we have 
    \begin{align*}
        \|\nabla_u^2 J(z^\star)\|^{-1} \leq 1/\mu.
    \end{align*}
    Additionally, as $\nabla_z J(z^\star)$ is $L_1$-Lipschitz,
    we have
    \begin{align*}
      \|  \nabla_u J(z^\star) - \nabla_u J({z^\star}') \| \leq \|\nabla_z J(z^\star) - \nabla_z J({z^\star}') \| \leq L_1 \|z^\star-{z^\star}'\|.
    \end{align*}
    Then $\nabla_u J(z^\star)$ is $L_1$-Lipschitz, its partial derivative is bounded by $L_1$:
     \begin{align*}
         \|\nabla_\theta \nabla_u J(z^\star)\| \leq L_1.
     \end{align*}
     We obtain
    \begin{align*}
        \|\nabla_\theta u^\star (\theta)\| \leq  \|\nabla_u^2 J(z^\star)\|^{-1}\|\nabla_\theta \nabla_u J(z^\star)\| \leq \frac{L_1}{\mu}, \,\,
        \|u^\star(\theta) - u^\star(\theta') \| \leq \frac{L_1}{\mu} \|\theta - \theta'\|.
    \end{align*}
    
    For any $\theta, \theta' \in \R^d$, let ${z^\star}' = (x_{\text{init}}, u^\star(\theta'), \theta')$,
    we have
    \begin{equation}\label{eq:l1}
    \begin{aligned}
        \|\nabla_{\theta} u^\star(\theta) - \nabla_{\theta} u^\star(\theta')\| 
        & = \|-[\nabla_u^2 J(z^\star)]^{-1}[\nabla_\theta \nabla_u J(z^\star)] + [\nabla_u^2 J(z^\star)]^{-1}[\nabla_\theta \nabla_u J({z^\star}')] \\
        & \quad - [\nabla_u^2 J(z^\star)]^{-1}[\nabla_\theta \nabla_u J({z^\star}')] + [\nabla_u^2 J({z^\star}')]^{-1}[\nabla_\theta \nabla_u J({z^\star}')] \| \\
        & \leq \|-[\nabla_u^2 J(z^\star)]^{-1}[\nabla_\theta \nabla_u J(z^\star)] + [\nabla_u^2 J(z^\star)]^{-1}[\nabla_\theta \nabla_u J({z^\star}')] \| \\
        & \quad + \|- [\nabla_u^2 J(z^\star)]^{-1}[\nabla_\theta \nabla_u J({z^\star}')] + [\nabla_u^2 J({z^\star}')]^{-1}[\nabla_\theta \nabla_u J({z^\star}')] \| \\
        & \leq \|[\nabla_u^2 J(z^\star)]^{-1}\| \|\nabla_{\theta} \nabla_u J(z^\star) - \nabla_{\theta} \nabla_u J({z^\star}')\| \\
        & \quad + \|\nabla_\theta \nabla_u J({z^\star}')\| \|[\nabla^2_u J(z^\star)]^{-1} -[\nabla^2_u J({z^\star}')]^{-1}\| \\
        & \leq \frac{1}{\mu} L_2 \|z^\star - {z^\star}'\| + L_1 \|[\nabla^2_u J(z^\star)]^{-1} -[\nabla^2_u J({z^\star}')]^{-1}\|.
    \end{aligned}
    \end{equation}
    Further, using the Lipschitz property of $\nabla_u^2 J(z^\star)$ and boundness of $\|\nabla_u^2 J(z^\star)\|$, we have
    \begin{equation}\label{eq:l2}
    \begin{aligned}
         \|[\nabla^2_u J(z^\star)]^{-1} -[\nabla^2_u J({z^\star}')]^{-1}\| & \leq \|[\nabla_u^2 J(z^\star)]^{-1}\|\|\nabla_u^2 J({z^\star}') - \nabla_u^2 J(z^\star)\|\|[\nabla_u^2 J({z^\star}')]^{-1}\| \\   
         & \leq \frac{L_3}{\mu^2} \|z^\star - {z^\star}'\|.
    \end{aligned}
    \end{equation}
    We note 
    \begin{equation}\label{eq:l3}
        \|z^\star-{z^\star}'\| \leq \|\theta - \theta'\| + \|u^\star(\theta)-u^\star(\theta')\| \leq \frac{\mu+L_1}{\mu} \| \theta-\theta'\| 
    \end{equation}
    Combining equation~\eqref{eq:l1},\eqref{eq:l2} and~\eqref{eq:l3} yields
    \begin{equation}\label{eq:lip_ug}
        \|\nabla_\theta u^\star(\theta) - \nabla_\theta u^\star (\theta') \| \leq  (\frac{L_2}{\mu} + \frac{L_1L_2 + L_1L_3}{\mu^2} + \frac{L_1^2L_3}{\mu^3}) \|\theta - \theta'\|.
    \end{equation}
    From the bound in~\eqref{eq:lip_ug}, we obtain
\begin{align*}
    \left\| \nabla^2_{\theta} u^\star(\theta) \right\| \leq \frac{L_2}{\mu} + \frac{L_1 L_2 + L_1 L_3}{\mu^2} + \frac{L_1^2 L_3}{\mu^3}.
\end{align*}
As defined in Proposition~\ref{prop:bounded_sensitivity}, the constants \( M_{u1} \) and \( M_{u2} \) bound the first- and second-order derivatives of the policy. Specifically, for all planning steps \( i \),
\[
\left\| \nabla_\theta u_i^\star(x; \theta) \right\| \leq M_{u1}, \quad
\left\| \nabla^2_\theta u_i^\star(x; \theta) \right\| \leq M_{u2}.
\]
\end{proof}

\section{Experiment Details}\label{ap:exp}

\subsection{System/Environment Setup}
\noindent
\textbf{Nonlinear Control Tasks.} We follow the experimental setup from~\cite{jin2020pontryagin} and train approaches, including DiffOP and RL-based MPC methods, using their provided environment implementations.

\noindent
\textbf{Voltage Control with Power Injection Constraints. } 
We follow the experimental setup from~\cite{feng2023bridging}, using the IEEE 13-bus distribution system. The nominal voltage magnitude at each bus (excluding the substation) is 4.16 kV, with an acceptable operating range of ±5\%, i.e., [3.952 kV, 4.368 kV].  
All experiments are conducted using Pandapower~\cite{turitsyn2011options} as the nonlinear power flow simulator for performance evaluation.
The system state is defined as \( x = [v, q, v^0] \in \mathbb{R}^9 \), where \( v \in \mathbb{R}^3 \) denotes voltage magnitudes, \( q \in \mathbb{R}^3 \) is the reactive power injection, and \( v^0 \) is the initial voltage.  
The control action \( u \in \mathbb{R}^3 \) represents the change in power injection.  
We emphasize that system state constraints are explicitly enforced within the optimization-based control policy.

\noindent
\textbf{Dynamics discretization.} The dynamical systems for nonlinear system control are discretized using the Euler method: $x_{t+1} = x_t + \Delta t \cdot f(x_t, u_t)$ with the discretization interval $\Delta t = 0.05$s or $\Delta t = 0.1$s.
For voltage control task, $\Delta t = 1$ [seconds].

\subsection{Implementations for Control Policies. }

\textbf{Initialization.} 
For the \textit{nonlinear control} tasks, all optimization-based policies are initialized following the same procedure used in~\cite{jin2020pontryagin}, while their original setup was in a supervised imitation learning context.
For the \textit{voltage control} tasks, all approaches are initialized with $A = 0.5I$ and $C_v = I$. Under this initialization, the resulting policy does not drive the voltage within the acceptable 5\% deviation range.

\noindent
\textbf{DiffOP.} 
For the policy gradient estimation in both nonlinear control and voltage control, we choose $N=10$ as the number of trajectory samples, $\beta = 0.01$ as the exploration rate, $\eta = 0.1$ to be the learning rate. DiffOP (Step) uses a planning horizon of $H=6$ for the Cartpole and Voltage control tasks, and $H=12$ for the Robot Arm and Quadrotor control tasks. DiffOP (Traj) sets the horizon to $H=T$, equal to the full episode length of each environment.

\noindent
\textbf{Pontryagin Differentiable Programming (PDP) (offline).}
We follow the methodology outlined~\cite{jin2020pontryagin} and utilize their published code to conduct our experiments. 
For the voltage control task, we generate 10,000 trajectories using expert demonstrations from TASRL~\cite{feng2023bridging}, and train a policy using the PDP framework. The planning horizon is $H=T$ as DiffOP(traj). 

\noindent
\textbf{RLMPC-TD.} A model-free reinforcement learning approach that updates a short-horizon MPC policy using temporal-difference (TD) Q-learning. 
We implement this approach using the MPC-RL framework~\cite{reinhardt2024mpc4rl}, and for each control task, we perform a grid search over learning rates $\{10^{-1}, 10^{-2}, \ldots, 10^{-8}\}$ to select the best-performing setting.

\noindent
\textbf{RLMPC-DPG.} 
A model-free reinforcement learning approach that updates a short-horizon MPC policy using both temporal-difference (TD) Q-learning and deterministic policy gradient (DPG).
We implement this approach using the MPC-RL framework~\cite{reinhardt2024mpc4rl}, and for each control task, we perform a grid search over learning rates $\{10^{-1}, 10^{-2}, \ldots, 10^{-8}\}$ and explore rate $\{10^{-1}, 10^{-2}, \ldots, 10^{-5}\}$ to select the best-performing setting.

\subsection{Running time}
Table~\ref{tab:voltage_time} reports the average runtime for voltage control task. All experiments were performed on a single thread of an Intel Core i7-11700K CPU to ensure a fair comparison, as previous works~\cite{jin2020pontryagin, xu2024revisiting} also compare running times using CPUs. 
We note that DiffOP(Step) uses a shorter horizon compared to DiffOP(Traj), which leads to less forward and backward computation per optimization. However, within each trajectory, DiffOP(Traj) solves the optimal control problem and computes gradients only once, while DiffOP(Step) performs both operations at every time step, resulting in increased total training time.
Compared to PDP (offline), we observe that the policy gradient computation in DiffOP does not significantly increase the backward pass time. 
PDP (offline) is trained entirely on offline data, with the total training time determined by the number of training iterations. In our setup, we use 10,000 iterations, evaluating the gradients using 10 trajectories per iteration.
Since RLMPC-TD and RLMPC-DPG are implemented using the packaged MPC4RL framework~\cite{reinhardt2024mpc4rl}, we report their total runtime without separating forward and backward passes.

\begin{table}[t]
\centering
\begin{tabular}{p{6cm}ccccc} \hline
         & \multicolumn{2}{c}{DiffOP} &  \multirow{2}{*}{PDP(offline)} & \multirow{2}{*}{RLMPC-DPG} & \multirow{2}{*}{RLMPC-TD }  \\
         & Traj      & Step     &                          &                         \\ \hline
Forward (Solving the policy) (seconds) &     0.026        &   0.011     &  0.026    &                   -      &   -                       \\
Backward (Calculate gradients) (seconds) &    0.010         &  0.002      &    0.010  &              -            &       -                  \\
\hline 
Total training time (hours)   &   0.37          &   1.36      &   1.94   &               0.86         &          0.72             \\\hline  
\end{tabular}
\caption{The average runtime for voltage control task.}\label{tab:voltage_time}
\end{table}

\end{document}